\newcolumntype{P}[1]{>{\centering\arraybackslash}p{#1}}
\newcolumntype{M}[1]{>{\centering\arraybackslash}m{#1}}
\newcommand*{\addFileDependency}[1]{
  \typeout{(#1)}
  \@addtofilelist{#1}
  \IfFileExists{#1}{}{\typeout{No file #1.}}
}
\newcommand*{\myexternaldocument}[1]{%
    \externaldocument{#1}%
}
\def\cond{{\text{cond}}}
\def\bfz{{\bf z}}
\def\EE{\mathbb{E}}
\newtheorem{theorem}{Theorem}[section]
\newtheorem{corollary}{Corollary}[theorem]
\newtheorem{lemma}[theorem]{Lemma}
\newtheorem{assumption}[theorem]{Assumption}
\newtheorem{remark}[theorem]{Remark}
\newtheorem{example}[theorem]{Example}
\renewcommand{\baselinestretch}{0.8}
\newcommand{\blind}{1}
\begin{document}

\def\spacingset#1{\renewcommand{\baselinestretch}%
{#1}\small\normalsize} \spacingset{1}


\if1\blind
{
  \title{\bf Improving prediction in M-estimation by integrating external information from heterogeneous populations}
   \author{
    Walter Dempsey,
    Jeremy M.G. Taylor  \\
    Department of Biostatistics, University of Michigan, USA}
    \date{}
  \maketitle
} \fi

\if0\blind
{
  \bigskip
  \bigskip
  \bigskip
  \begin{center}
    {\LARGE\bf Improving prediction in M-estimation by integrating external information from heterogeneous populations}
\end{center}
  \medskip
} \fi

\bigskip
\begin{abstract}
     A novel approach to improve prediction and inference in M-estimation by integrating external information from heterogeneous populations is proposed. Our method leverages joint asymptotics to combine estimates from external and internal datasets, where the external dataset provides auxiliary information about a subset of parameters of interest. We introduce a shrinkage estimator that combines internal and external estimates under a general class of transformations that ensure consistency across populations. The proposed estimator is shown to achieve improved statistical efficiency compared to using only internal data, with theoretical guarantees on risk reduction. Our approach is particularly valuable in settings where external information is available but populations may differ in their baseline characteristics or effect sizes. We highlight the general applicability by consider applications including generalized linear models, causal inference for treatment effects, missing data, and surrogate endpoint analysis. We demonstrate the method's utility through simulation studies and an analysis of data from the Intern Health Study highlighting its application in time-varying causal moderation analysis with synthetic surrogates.
\end{abstract}



\noindent%
{\it Keywords:}  M-estimation; External Information Integration; Shrinkage Estimation; Heterogeneous Populations; Asymptotic Efficiency; Causal Inference; Missing Data; Surrogate Endpoints; Generalized Linear Models.
\vfill

\newpage
\spacingset{1.8}

\section{Introduction}

The integration of external information to improve statistical inference has emerged as a critical challenge in modern data science, particularly as researchers seek to leverage auxiliary datasets to enhance the efficiency of their primary analyses. Traditional approaches to this problem have primarily focused on settings where external and internal populations are assumed to be homogeneous, or where external information is treated as population-level summary statistics that constrain internal parameter estimates \citep{10.1214/19-AOAS1293}.  Both generalized method of moments (GMM)~\cite{newey1994large, imbens1994combining} and Empirical Likeihood (EL)~\cite{owen1988empirical, qin1994empirical} approaches have been developed to integrate external information to improve statistical efficiency.   Both GMM and EL consider the external summary information as a fixed, deterministic constraint, a valid assumption only when the external study's sample size is large relative to the internal study.  The Generalized Integration Model (GIM)~\cite{zhang2020generalized} is a likelihood-based data integration technique that improved on the GMM and EL approaches by accounting for the uncertainty inherent in the external summary data.  All of these approaches, however, assume population homogeneity which is often violated in practice where populations may differ in their baseline characteristics, effect sizes, or underlying causal mechanisms. 

More recently, there has been several papers focused on the challenge of combining external and internal data from heterogeneous populations.  Synthetic data integration framework~\cite{gu2023_synthetic} considers stacked imputation with synthetic outcome data to improve statistical efficiency while accounting for population heterogeneity.  A robust fusion-extraction procedure with summary statistics in the presence of biased sources was recently proposed~\cite{wang2023robust} that is asymptotically equivalent to an oracle estimator that uses only unbiased data. 
Alternative shrinkage-based approaches have been developed that allow for combining observational and experimental datasets for causal inference~\citep{10.1111/biom.13827}, and James-Stein type estimators have shown promise in integrating external model information for linear regression settings~\citep{10.1093/biomtc/ujae072}. The existing literature has been limited to allowing for population heterogeneity while making strong distributional assumptions~\cite{10.1093/biomtc/ujae072} or considering specific model classes (e.g., Average Treatment Effect in causal inference). The challenge becomes particularly acute in the M-estimation framework, which encompasses a broad class of statistical methods including generalized linear models, causal inference, missing data problems, and surrogate endpoint analysis.

In this paper, we propose a novel approach that addresses these limitations through a unified framework for integrating external information in M-estimation under heterogeneous populations. Our method leverages joint asymptotics to optimally combine estimates from external and internal datasets, where the external dataset provides auxiliary information about a subset of parameters of interest. The key innovation lies in our introduction of a general class of transformations that ensure consistency across populations, coupled with a shrinkage estimator that achieves improved statistical efficiency compared to using only internal data. Unlike existing methods that rely on constrained optimization or strong homogeneity assumptions, our approach provides theoretical guarantees on risk reduction while accommodating population heterogeneity through flexible transformation functions. Empirically, we see that similar efficiency gains as GIM under population homogeneity, and strong improvements over internal-only estimates in terms of mean squared error under population heterogeneity.

The generality of our framework enables applications across diverse statistical settings. We demonstrate its utility in generalized linear models, where external information about covariate effects can be integrated to improve prediction accuracy. In causal inference contexts, our method allows for the incorporation of external treatment effect estimates while accounting for potential differences in population characteristics. For missing data problems, external information about missingness mechanisms can enhance efficiency, while in surrogate endpoint analysis, external validation studies can inform internal surrogate-to-clinical outcome relationships. Through comprehensive simulation studies and an analysis of data from the Intern Health Study \citep{necamp2020}, we illustrate the method's practical value in time-varying causal moderation analysis.

The remainder of this paper is organized as follows. Section~\ref{sec:shrinktarget} introduces our theoretical framework and establishes the joint asymptotic properties of external and internal estimators. Section~\ref{sec:commonexs} demonstrates the method's application across various statistical settings including linear models, generalized linear models, conditional average treatment effects (CATE), and secondary endpoint analysis. Section~\ref{sec:jsest} presents our proposed shrinkage estimator and derives its theoretical properties. Section~\ref{section:simulations} then provides simulation evidence of its performance.

\section{Shrinkage Target}
\label{sec:shrinktarget}

Let $Y$ be an outcome of interest and $X$ a set of covariates (including an intercept). 
A parameter~$\theta \in \mathbb{R}^p$ is estimated using an external dataset~$D_E = \{ (X_i, Y_i) \}_{i=1}^{n_E}$ by solving a set of estimation equations~$\sum_{i=1}^{n_E} \psi(Y_i, X_i; \theta) = 0$.  We assume the analyst has access to model parameters and associated standard errors, i.e., access $(\hat \theta_{E}, \hat \Sigma_{\theta}^E)$, as well as knowledge of the estimation equations used, $\psi$.  The analyst also has access to individual-level data from an internal dataset~$D_I = \{ (X_i, Z_i, Y_i) \}_{i=1}^{n_I}$ where $Z$ are additional covariates.  A parameter~$\gamma \in \mathbb{R}^q$ is estimated using the internal dataset by solving a set of estimation equations~$\sum_{i=1}^{n_I} \phi(Y_i, X_i, Z_i; \gamma) = 0$.  

The analyst is interested in improving statistical efficiency of the analysis by including the external information. We do not assume correct model specification, i.e., we do not assume~$\psi$ corresponds to score functions for a correctly specified conditional distribution of~$Y$ given $X$.  In contrast to current methods that calibrate the internal solution by solving a constrained optimization problem as in~\cite{doi:10.1080/01621459.2015.1123157} and~\cite{PeisongLawless2019},  our proposed approach is based on joint asymptotics. We start by estimating the same parameter using the internal dataset.  Let~$\hat \theta_I$ denote this estimate.  The proposed method will rely on the following assumption.

\begin{assumption}[Consistency]
\label{consistency}
   The external and internal estimates are asymptotically consistent to the same population parameter under a suitable, differentiable transformation~$h: \mathbb{R}^p \to \mathbb{R}^{p'}$ for $p'\le p$, i.e., $\lim_{n_E \to \infty} h(\hat \theta_E) = \lim_{n_I \to \infty} h(\hat \theta_I) = h^\star$.
\end{assumption}
Assumption~\ref{consistency} states that there exists a known transformation~$h$ of the estimates that will converge to the same population parameter.  The most common example from the literature is the identity transformation $h(\theta) = \theta$ which implies~$\hat \theta_E$ and $\hat \theta_I$ converge to the same limiting population parameter~$\theta^\star$. In this case, a sufficient condition to hold is that the conditional distribution of $Z$ given $(Y,X)$ is equal in law across the internal and external study populations.  Under this assumption, marginalizing over $Z$ in the internal dataset yields equivalent distributions.  The condition is not necessary.  Consider $X$ to be a finite set of $K$ strata, and consider a linear regression that has one parameter per strata, i.e.,~$\theta = (\theta_1, \ldots, \theta_K)$ and $\mathbb{E} [Y|X] = \sum_{k=1}^K \theta_k 1(X = k)$. Then we only require the marginal means per strata to be equal, i.e., $\mathbb{E}_I [ Y | X = k] = \mathbb{E}_E [Y | X = k]$ for all $k=1,\ldots,K$ where the subscript indexes the population. In both settings, use of estimation equations~$\psi$ yield consistent estimates with potentially distinct asymptotic variances.  

A slightly weaker assumption for the conditional distribution~$Y|X$ assumes the causal dynamics may be stable (Penrose, 2004).  In this case, we may have $g_E \{ \mathbb{E}_E (Y | X) \} = \theta_0^E + \sum_{j=1}^p \theta_j^E X_j$ and $g_I \{ \mathbb{E}_I (Y | X) \} = \theta_0^I + \sum_{j=1}^p \theta_j^I X_j$ and the transportability is only in the covariate terms, i.e., $h(\theta) = (\theta_1,\ldots, \theta_p)$ such that $\hat \theta_j^I, \hat \theta_j^E \to \theta_j^\star$ but $\hat \theta_0^I \not = \hat \theta_0^E$. An alternative even weaker assumption is~$\hat \theta_{E,j} = c \hat \theta_{I,j}$for unknown $c \not = 0$ and $j > 1$, i.e., excluding the intercept, reflecting the belief that the relative covariate effects are similar between the two study populations as in~\cite{Taylor2022-xs}.  In this case,~$h(\theta) = (\theta_2/\theta_1,\ldots, \theta_p/\theta_1) \in \mathbb{R}^{p-1}$ satisfies Assumption~\ref{consistency}.

\subsection{An improved estimator based on joint asymptotics}

We next consider joint asymptotics among the external estimate~$\hat \theta_E$ and the two internal estimates $(\hat \theta_I, \hat \gamma)$.  We assume internal and external studies grow  proportionally to one another, i.e., $n = n_I = c n_E$ for $c \in (0,\infty)$. Here, the constant $c$ controls the relative rate of growth between the two studies. When $c \to 0$ corresponds to the setting where the external study is considered a population-level study and our approach will be shown to mirror prior approaches that calibrate internal analysis to external population-level summary statistics.  

\begin{lemma}[Joint asymptotics]
\label{lemma:jointasymp}
Let~$\nabla h(\theta^\star_E) = \nabla h_E$ and $\nabla h(\theta^\star_I) = \nabla h_I$ denote the gradient of $h$ evaluated at the external and internal population parameters.  
Then we have $\sqrt{n} \left( 
    h(\hat \theta_E) - h (\theta^\star_E),
    h(\hat \theta_I) - h (\theta^\star_I),
    \hat \gamma_I - \gamma^\star 
    \right)
    $ converges in distribution to a mean-zero Normal distribution with covariance
    \begin{equation}
    \label{eq:jointasymptotics}
    \left( 
    \begin{array}{ccc}
         c^{-1} \nabla h_E^\top \Sigma^E_{\theta^\star} \nabla h_E & {\bf 0}_{p} & {\bf 0}_{q}   \\
        {\bf 0}_{p} &  \nabla h_I^\top \Sigma^I_{\theta^\star} \nabla h_I &  \, \Sigma^I_{\theta^\star, \gamma^\star} \nabla h_I \\
        {\bf 0}_{q} & \nabla h_I^\top \Sigma^I_{\theta^\star, \gamma^\star} & \Sigma^I_{\gamma^\star}
    \end{array}
    \right),
\end{equation}
where~$\Sigma_{\theta^\star}^E$, $\Sigma_{\theta^\star}^I$ and $\Sigma_{\gamma^\star}^I$ are the asymptotic variances associated with estimating equations for $\theta$ in the external and internal studies and $\gamma$ for the internal study respectively, and $\Sigma_{\theta^\star, \gamma^\star}$ is the covariance between the estimating equations. 
\end{lemma}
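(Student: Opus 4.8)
The plan is to prove the result through the standard influence-function linearization of M-estimators, combined with the independence of the two data sources and the multivariate delta method. First I would establish that each estimator admits an asymptotically linear (i.i.d.) expansion. Under the usual M-estimation regularity conditions---consistency of $\hat\theta_E,\hat\theta_I,\hat\gamma_I$ to their probability limits $\theta^\star_E,\theta^\star_I,\gamma^\star$, differentiability of $\psi$ and $\phi$ in the parameter with nonsingular Jacobians, a dominating envelope giving stochastic equicontinuity, and finite second moments of the estimating functions---a Taylor expansion of the estimating equations about the limit yields
\[
\sqrt{n_E}\,(\hat\theta_E-\theta^\star_E)=\frac{1}{\sqrt{n_E}}\sum_{i=1}^{n_E} A_E^{-1}\psi(Y_i,X_i;\theta^\star_E)+o_p(1),
\]
with analogous expansions for $\hat\theta_I$ and $\hat\gamma_I$ using influence functions $A_{I,\theta}^{-1}\psi$ and $A_{I,\gamma}^{-1}\phi$, where $A_E=-\E_E[\nabla_\theta\psi]$, $A_{I,\theta}=-\E_I[\nabla_\theta\psi]$ and $A_{I,\gamma}=-\E_I[\nabla_\gamma\phi]$ are evaluated at the respective limits.

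The key observation is that $\hat\theta_I$ and $\hat\gamma_I$ are computed from the \emph{same} observations, so I would stack their estimating equations into $\Phi=(\psi^\top,\phi^\top)^\top$ and treat $(\hat\theta_I,\hat\gamma_I)$ as a single M-estimator. Because $\psi$ does not depend on $\gamma$ and $\phi$ does not depend on $\theta$, the stacked Jacobian is block diagonal with blocks $A_{I,\theta}$ and $A_{I,\gamma}$, whereas the outer-product (``meat'') matrix retains the off-diagonal block $\E_I[\psi\phi^\top]$. Consequently the joint sandwich covariance of $(\hat\theta_I,\hat\gamma_I)$ has diagonal blocks $\Sigma^I_{\theta^\star}=A_{I,\theta}^{-1}\E_I[\psi\psi^\top]A_{I,\theta}^{-\top}$ and $\Sigma^I_{\gamma^\star}$, and cross block $\Sigma^I_{\theta^\star,\gamma^\star}=A_{I,\theta}^{-1}\E_I[\psi\phi^\top]A_{I,\gamma}^{-\top}$; this cross-covariance, generated entirely by the shared data, is precisely the quantity appearing in the lemma.

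With the linear expansions in hand, I would invoke a multivariate Lindeberg--Feller CLT applied to the stacked i.i.d.\ averages. Since $D_E$ and $D_I$ are independent samples, the external influence function is uncorrelated with both internal influence functions, producing the zero blocks ${\bf 0}_p,{\bf 0}_q$ in the first row and column. The delta method then transfers these limits through $h$: differentiability gives $\sqrt{n}\,(h(\hat\theta)-h(\theta^\star))=\nabla h^\top\sqrt{n}\,(\hat\theta-\theta^\star)+o_p(1)$, so the external variance becomes $\nabla h_E^\top\Sigma^E_{\theta^\star}\nabla h_E$, the internal variance becomes $\nabla h_I^\top\Sigma^I_{\theta^\star}\nabla h_I$, and the cross-covariance between $h(\hat\theta_I)$ and $\hat\gamma_I$ becomes $\nabla h_I^\top\Sigma^I_{\theta^\star,\gamma^\star}$. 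Finally the proportional-growth condition is used to express the external term at the common $\sqrt{n}$ rate: writing $\sqrt{n}\,(\hat\theta_E-\theta^\star_E)=\sqrt{n/n_E}\,\sqrt{n_E}\,(\hat\theta_E-\theta^\star_E)$ and passing $n/n_E$ to its limit yields the scalar multiple (displayed as $c^{-1}$) on the external block, completing the assembly of~(\ref{eq:jointasymptotics}).

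The main obstacle lies in the internal pair: establishing the joint asymptotic normality of $(\hat\theta_I,\hat\gamma_I)$ with the correct cross term, since this is where the genuinely nonstandard structure enters. The single-estimator limits are routine, but the shared-data correlation requires verifying the stacked regularity conditions (identifiability of the joint solution, a uniform law of large numbers for $\nabla\Phi$ to justify the linearization, and finiteness of $\E_I[\psi\phi^\top]$) and carefully tracking that the block-diagonal bread combined with the full meat produces exactly $\Sigma^I_{\theta^\star,\gamma^\star}=A_{I,\theta}^{-1}\E_I[\psi\phi^\top]A_{I,\gamma}^{-\top}$. A secondary point requiring care is that $h(\hat\theta_E)$ and $h(\hat\theta_I)$ are centered at possibly distinct limits $h(\theta^\star_E)$ and $h(\theta^\star_I)$; Assumption~\ref{consistency} guarantees these coincide at $h^\star$, but the centering in the CLT is at each estimator's own probability limit, so no contradiction arises and the equality is only invoked downstream when forming mean-zero contrasts.
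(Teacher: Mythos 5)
Your proposal is correct and follows essentially the same route the paper takes: the paper proves Lemma~\ref{lemma:jointasymp} only by appeal to standard Z-estimation theory plus the delta method, exhibiting the stacked sandwich covariance $Q^{-1}WQ^{-1}$ with block-diagonal bread and full meat for the internal pair and relying on independence of the two samples for the zero blocks, which is exactly the structure you spell out. The only point worth flagging is notational rather than substantive: with the paper's stated convention $n=n_I=cn_E$ the external block scales as $\lim n/n_E=c$ rather than $c^{-1}$, so your parenthetical ``displayed as $c^{-1}$'' is tracking an inconsistency in the paper's own convention (the $c^{-1}$ in the lemma and the weights in Example~\ref{ex:linear} are consistent with $n_E=cn_I$ instead), not an error in your argument.
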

Lemma~\ref{lemma:jointasymp} is a direct consequence of standard Z-estimation theory and application of the Delta-method~\citep{Vaart_1998}. An estimate of the asymptotic variance-covariance matrix for the external parameter is provided.  Asymptotic variance-covariance matrix~$\Sigma^I$ for the internal parameters~$(\hat \theta_I, \hat \gamma_I)$ is equal to the sandwich covariance~$Q^{-1} W Q^{-1}$ where
\begin{equation*}
Q = \mathbb{E}_I \left[ 
\begin{array}{c c} 
\frac{\partial}{\partial \theta} \psi(Y,X; \theta) & {\bf 0}_{p \times q} \\
{\bf 0}_{q \times p} & \frac{\partial}{\partial \gamma} \phi(Y,X,Z; \gamma) \\
\end{array}
\right], \quad
W = 
\mathbb{E}_I \left[ 
 \left( \begin{array}{c}
      \psi (Y,X;\theta) \\
      \phi(Y,X,Z;\gamma) 
 \end{array}
 \right) 
 \left( \begin{array}{c}
      \psi (Y,X;\theta) \\
      \phi(Y,X,Z;\gamma) 
 \end{array}
 \right)^\top 
 \right].
\end{equation*}
Both $Q$ and $W$ can be estimated empirically by empirical averages using the internal study.  

Under Assumption~\ref{consistency}, equation~\eqref{eq:jointasymptotics} implies the following joint distribution 
  \begin{equation}
    \label{eq:jointasymptotics2}
    \sqrt{n} \left(  
    \begin{array}{c}
    h(\hat \theta_E) - h(\hat \theta_I) \\
    \hat \gamma_I - \gamma^\star 
    \end{array}
    \right) \to N \left( {\bf 0}_{p + q}, 
    \left( 
    \begin{array}{cc}
         c^{-1} \nabla h_E^\top \Sigma^E_{\theta^\star} \nabla h_E + \nabla h_I^\top \Sigma^I_{\theta_\star} \nabla h_I & \Sigma_{\theta^\star, \gamma^\star}^I \nabla h_I \\
        \nabla h_I^\top \Sigma_{\gamma^\star, \theta^\star}^I & \Sigma_{\gamma^\star}^I
    \end{array}
    \right)
    \right).
\end{equation}
To reduce notation complexity, let $\Sigma^h_{\theta^\star} := c^{-1} \nabla h_E^\top  \Sigma^E_{\theta^\star} \nabla h_E + \nabla h_I^\top \Sigma^I_{\theta_\star} \nabla h_I$ and $\Sigma_{\theta^\star, \gamma^\star}^h = \Sigma_{\theta^\star,\gamma^\star}^I \nabla h_I$.  Then the joint distribution above implies the conditional distribution for $\sqrt{n} (\hat \gamma_I - \gamma^\star)$ given the difference between the transformed external and internal estimates is
$$
\sqrt{n} (\hat \gamma_I - \gamma^\star) | \sqrt{n} (h(\hat \theta_I) - h(\hat \theta_E))
\to N \left( \Sigma^h_{\gamma, \theta} \left( \Sigma^h_\theta\right)^{-1} \sqrt{n} (h(\hat \theta_I) - h(\hat \theta_E)), \Sigma_{\gamma^\star}^I - \Sigma^h_{\gamma, \theta} \left( \Sigma^h_{\theta} \right)^{-1} \Sigma^h_{\theta, \gamma} \right).
$$
The conditional distribution suggests an improved, unbiased estimator of~$\gamma$ can be given by 
\begin{equation}
\label{eq:newestimator}
\hat \gamma_{\cond} = \hat \gamma_I - \Sigma^h_{\gamma^\star, \theta^\star} \left( \Sigma^h_{\theta^\star}\right)^{-1} (h(\hat \theta_I) - h(\hat \theta_E)).
\end{equation}
We refer to~\eqref{eq:newestimator} as the {\it conditional estimate} as it leverages the asymptotic conditional distribution to improve efficiency.  Corollary~\ref{corollary:efficiency} demonstrates this claim to be true asymptotically.

\begin{corollary}
\label{corollary:efficiency}
    Under Assumption~\ref{consistency}, the conditional estimate given by~\eqref{eq:newestimator} is asymptotically more efficient than the internal-only estimate. 
\end{corollary}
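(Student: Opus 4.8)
The plan is to show that $\hat\gamma_{\cond}$ is a consistent, $\sqrt{n}$-asymptotically normal estimator of $\gamma^\star$ whose asymptotic covariance is the Schur complement
$$
\Sigma^{\cond}_{\gamma^\star} = \Sigma^I_{\gamma^\star} - \Sigma^h_{\gamma^\star,\theta^\star}\bigl(\Sigma^h_{\theta^\star}\bigr)^{-1}\Sigma^h_{\theta^\star,\gamma^\star},
$$
and then to observe that this differs from the internal-only asymptotic covariance $\Sigma^I_{\gamma^\star}$ by a positive semidefinite matrix. Everything rests on the joint limit~\eqref{eq:jointasymptotics2}, so that no probabilistic input beyond Lemma~\ref{lemma:jointasymp} is needed; the argument is essentially a control-variate computation in which the mean-zero discrepancy $h(\hat\theta_I)-h(\hat\theta_E)$ plays the role of the control variate.

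First I would linearize. Setting $A := \Sigma^h_{\gamma^\star,\theta^\star}(\Sigma^h_{\theta^\star})^{-1}$, definition~\eqref{eq:newestimator} gives
$$
\sqrt{n}\bigl(\hat\gamma_{\cond}-\gamma^\star\bigr) = \sqrt{n}\bigl(\hat\gamma_I-\gamma^\star\bigr) - A\,\sqrt{n}\bigl(h(\hat\theta_I)-h(\hat\theta_E)\bigr).
$$
Under Assumption~\ref{consistency} we have $h(\theta^\star_E)=h(\theta^\star_I)$, so $\sqrt{n}(h(\hat\theta_I)-h(\hat\theta_E))$ is asymptotically mean zero; this is the single place Assumption~\ref{consistency} enters, and it ensures that subtracting the correction term leaves $\hat\gamma_{\cond}$ centered at $\gamma^\star$, i.e.\ consistency and asymptotic unbiasedness are preserved.

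Next I would read off the limiting covariance from~\eqref{eq:jointasymptotics2}. Writing $U:=\sqrt{n}(h(\hat\theta_I)-h(\hat\theta_E))$ and $V:=\sqrt{n}(\hat\gamma_I-\gamma^\star)$, the pair $(U,V)$ is jointly normal with $\operatorname{Var}(U)=\Sigma^h_{\theta^\star}$, $\operatorname{Var}(V)=\Sigma^I_{\gamma^\star}$ and cross-covariance $\operatorname{Cov}(V,U)=\Sigma^h_{\gamma^\star,\theta^\star}$. Expanding
$$
\operatorname{Var}(V-AU) = \operatorname{Var}(V) - A\operatorname{Cov}(U,V) - \operatorname{Cov}(V,U)A^\top + A\operatorname{Var}(U)A^\top,
$$
the specific choice $A=\Sigma^h_{\gamma^\star,\theta^\star}(\Sigma^h_{\theta^\star})^{-1}$ makes the two cross terms and the quadratic term collapse, leaving a single subtracted copy of $\Sigma^h_{\gamma^\star,\theta^\star}(\Sigma^h_{\theta^\star})^{-1}\Sigma^h_{\theta^\star,\gamma^\star}$ and hence exactly $\Sigma^{\cond}_{\gamma^\star}$, which coincides with the conditional variance displayed immediately before~\eqref{eq:newestimator}. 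The efficiency claim is then immediate, since
$$
\Sigma^I_{\gamma^\star}-\Sigma^{\cond}_{\gamma^\star}=\Sigma^h_{\gamma^\star,\theta^\star}\bigl(\Sigma^h_{\theta^\star}\bigr)^{-1}\Sigma^h_{\theta^\star,\gamma^\star}
$$
is of the form $B^\top M^{-1}B$ with $M=\Sigma^h_{\theta^\star}\succ 0$, hence positive semidefinite, so $\Sigma^{\cond}_{\gamma^\star}\preceq\Sigma^I_{\gamma^\star}$ in the L\"{o}wner order.

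The variance computation is mechanical; the two points needing genuine care — and the main obstacle — are showing that the conditioning covariance $\Sigma^h_{\theta^\star}=c^{-1}\nabla h_E^\top\Sigma^E_{\theta^\star}\nabla h_E+\nabla h_I^\top\Sigma^I_{\theta^\star}\nabla h_I$ is invertible, without which neither $A$ nor the Schur complement is defined, and handling the fact that $A$ must in practice be formed from consistent sandwich-type estimates of the unknown covariances. For the former I would note that $\Sigma^h_{\theta^\star}$ is a sum of two positive semidefinite matrices and is strictly positive definite as soon as $\nabla h_I$ has full column rank $p'$ and $\Sigma^I_{\theta^\star}\succ 0$, a mild regularity condition on the Jacobian of $h$ that I would state explicitly. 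For the latter, replacing the population covariances by their consistent estimators changes neither the limiting distribution nor the limiting variance, by Slutsky's theorem, so the displayed asymptotics are unaffected.
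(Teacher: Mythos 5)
Your proposal is correct and follows essentially the same route as the paper: the paper likewise identifies the asymptotic covariance of $\hat\gamma_{\cond}$ with the Schur complement $\Sigma^I_{\gamma^\star}-\Sigma^h_{\gamma^\star,\theta^\star}(\Sigma^h_{\theta^\star})^{-1}\Sigma^h_{\theta^\star,\gamma^\star}$ and concludes by noting the subtracted matrix is positive semidefinite, comparing quadratic forms. You merely make explicit the $\operatorname{Var}(V-AU)$ computation and the invertibility/Slutsky caveats that the paper leaves implicit, which is a faithful elaboration rather than a different argument.
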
 
The claim is true by examination of the quadratic forms, i.e., $x^\top \left( \Sigma_{\gamma^\star} - \Sigma^h_{\gamma^\star, \theta^\star} \left( \Sigma^h_{\theta^\star}\right)^{-1} \Sigma^h_{\theta^\star, \gamma^\star} \right) x \leq x^\top \Sigma_{\gamma^\star} x$ for any $x \in \mathbb{R}^q$ since $\Sigma^h_{\gamma^\star, \theta^\star} \left( \Sigma^h_{\theta^\star} \right)^{-1} \Sigma^h_{\theta^\star, \gamma^\star}$ is positive semi-definite.  

Prior work on constrained maximum likelihood estimation guarantees efficiency gains as a restricted maximum likelihood estimate is guaranteed to be more efficient than an unrestricted one if the restriction is correct. Here, we do not perform likelihood-based inference and therefore rely on joint asymptotics from Z-estimation theory.


\begin{remark}[Generalized Method of Moments]
Our framework can be extended to allow the analyst to use generalized method of moments (GMM), where estimators are defined through moment conditions of the form $\mathbb{E}[\psi(Y,X;\theta)] = 0$ for some vector-valued function $\psi$. The GMM estimator $\hat{\theta}$ solves $\sum_{i=1}^{n} \psi(Y_i, X_i; \theta) = 0$ and satisfies $\sqrt{n}(\hat{\theta} - \theta_0) \to_d N(0, \Sigma_\theta)$ where $\Sigma_\theta = (G^\top W G)^{-1} G^\top W \Omega W G (G^\top W G)^{-1}$, with $G = \mathbb{E}[\partial \psi(Y,X;\theta_0)/\partial \theta]$, $\Omega = \mathbb{E}[\psi(Y,X;\theta_0)\psi(Y,X;\theta_0)^\top]$, and $W$ is a weighting matrix \citep{hansen1982}. Since GMM estimators rely on asymptotic normality and our joint asymptotic framework in Section~\ref{sec:shrinktarget} requires only this property, our proposed conditional and James-Stein estimators can be directly applied to GMM settings without modification. 
\end{remark}

\begin{remark}[Multiple external studies]
A common setting is the existence of multiple external studies that may provide information about~$\theta$. If each external study satisfies Assumption~\ref{consistency}, then our proposed approach naturally extends to handle multiple external studies; however, it is often the case that a subset of external sources are biased due to issues like biased sampling, data corruption, or model misspecification.  Wang et al. (2023)~\cite{wang2023robust} proposed a robust fusion-extraction procedure which yields a consistent estimator~$\tilde \theta$ from multiple studies.  A key condition is that more than half of the data come from unbiased data sources.  In this case, we can apply their procedure first to build an external estimate and then apply our proposed method to the resulting estimate. 
\end{remark}

\subsection{Some common examples}
\label{sec:commonexs}

We start by showing how our proposed approach can be applied in three common settings, highlighting two useful theoretical results in the process.

\begin{example}[Linear Models]
\label{ex:linear}
\normalfont
Consider the linear model $E[ Y | X,Z] = X^\top \gamma_x + Z^\top \gamma_z$. Assume that the external set of estimating equations solve $\sum_{i=1}^{n_e} X_i (Y_i - X_i^\top \theta)  = 0$.  Assume the same set is used with the internal study and the transformation to satisfy Assumption~\ref{consistency} is the identity. 

For each element of $Z$, first solve the following set of estimation equations $\sum_{i=1}^{n_I} X_i (Z_j - X_i^\top \beta_j) = 0$. Let~$\tilde Z_i = Z_i - \hat B X_i$ where $\hat B \in \mathbb{R}^{q \times p}$ has the $j$th row equal to $\hat \beta_j$.  Now consider the set of estimating equations for the internal data given by

$$
\sum_{i=1}^{n_I} \left( \begin{array}{c} X_i \\ \tilde Z_i \end{array} \right) (Y_i - X_i^\top \gamma_x - \tilde Z_i^\top \gamma_z) = {\bf 0}_{p+q}
$$
This is a linear transformation of the original model where $\tilde Z$ replaces the set of additional covariates~$Z$, made orthogonal to $X$ so the estimating equations satisfy
\begin{equation}
     \label{eq:centeredls}
\sum_{i=1}^{n_I} \left[ X_i (Y_i - X_i^\top \tilde \gamma_x - \tilde Z_i^\top \tilde \gamma_z) \right] = 
\sum_{i=1}^{n_I} \left[ X_i (Y_i - X_i^\top \tilde \gamma_x )\right] = 0.
\end{equation}
This implies~$\hat {\tilde \gamma}_x = \hat \theta_I$ and $\gamma_z$ is asymptotically orthogonal to the internal study $\theta$-estimates.  Returning to the improved estimator, this implies~$\Sigma^h_{\gamma_x, \theta} = \Sigma_{\theta, \theta}$ and $\Sigma_{\gamma_z, \theta} = 0$. Thus
\begin{equation*}
\begin{split}
\hat \gamma_{\text{cond}} &= \left( 
\begin{array}{c}
     \hat \gamma_x \\ 
     \hat \gamma_z 
\end{array}
\right) 
- 
\left( \begin{array}{c}
     \Sigma_{\theta^\star}^I (c^{-1} \Sigma_{\theta^\star}^E + \Sigma_{\theta^\star}^I)^{-1} \\
     {\bf 0}_{q \times p}
\end{array}
\right) ( \hat \theta_I - \hat \theta_E) \\
&= 
 \left( 
\begin{array}{c}
     \left( (\Sigma_\theta^I)^{-1} + c (\Sigma_\theta^E)^{-1} \right)^{-1}  (\Sigma_\theta^I)^{-1} \hat \theta_I + 
     \left( (\Sigma_\theta^I)^{-1} + c (\Sigma_\theta^E)^{-1} \right)^{-1} 
     c (\Sigma_\theta^E)^{-1} \hat \theta_E  \\ 
     \hat \gamma_z 
\end{array}
\right).
\end{split}
\end{equation*}
where the second equality is by direct application of the Woodbury identity.
This shows the estimate of $\hat \gamma_x$ is a precision weighted average of the internal and external estimates, while $\hat \gamma_z$ does not change as it is asymptotically orthogonal, implying external information about $\theta$ provides no information for improving our estimate.  
If~$\Sigma_\theta^I = \Sigma_\theta^E$ then the weights become $(1+c)^{-1}$ and $c(1+c)^{-1}$ which is equivalent to relative sample size weighting. 


     

\end{example}

Example~\ref{ex:linear} shows that the proposed conditional method leads to reasonable behavior in the linear setting.  Moreover, it establishes a connection with the constrained maximum likelihood literature.  Our next example considers an extension to generalized linear models.  We establish reasonable behavior in this setting and discuss distinctions from the constrained maximum likelihood approach.

\begin{example}[Generalized Linear Models]
\label{ex:glms}
\normalfont
Consider the generalized linear model $g(\EE[Y|X,Z]) = X^\top \gamma_x + Z^\top \gamma_z$ where $g$ is the link function.  In most settings, the non-linear link function makes such a model non-collapsible.  That is, if the true model is given by the above, then $g(E[Y|X]) = X^\top \theta$ cannot be true model. Theorem~\ref{thm:glms} demonstrates reasonable behavior of our conditional estimator in this setting.

\begin{theorem}
\label{thm:glms}
    Let $(Y_i, X_i,Z_i)$ denote independent and identically distributed random variables sampled from a joint probability function~$\mathcal{P}$, with $Y_i$ an outcome from a generalized linear model with canonical link $g$ with covariates $(X_i,Z_i)$ as covariates. Consider two generalized linear models:
    \begin{equation*}
        \begin{split}
            g(\EE[Y | X]) &= \theta_0 + X^\top \theta_X \\
            g(\EE[Y | X, Z]) &= \gamma_0 + X^\top \gamma_X + Z^\top \gamma_Z \\
        \end{split}
    \end{equation*}
    Assume~$\psi(Y, X; \theta)$ and $\phi(Y, X, Z; \gamma)$ are the score equations for maximum likelihood estimation under the above potentially misspecified models.  Then asymptotically the $Z$-component of $\hat \gamma_{\text{cond}}$ will converge to the  $\hat \gamma_{Z}^{(I)}$, i.e., there is no asymptotic improvement to the estimated coefficients of $Z$.  
\end{theorem}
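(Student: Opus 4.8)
The plan is to show that the entire $Z$-block of the correction term in the conditional estimator~\eqref{eq:newestimator} is asymptotically zero, which forces $\hat\gamma_{\cond,Z}$ to coincide with $\hat\gamma_{Z}^{(I)}$. Taking $h$ to be the identity (the relevant transformation here, since both studies fit the same marginal model, so $\nabla h_I = \nabla h_E = I$ and $\Sigma^h_{\gamma^\star,\theta^\star} = \Sigma^I_{\gamma^\star,\theta^\star}$), the correction applied to $\hat\gamma_I$ is $\Sigma^I_{\gamma^\star,\theta^\star}(\Sigma^h_{\theta^\star})^{-1}(\hat\theta_I - \hat\theta_E)$. Since $(\Sigma^h_{\theta^\star})^{-1}$ and $\hat\theta_I - \hat\theta_E = O_p(n^{-1/2})$ enter only as right factors, I would reduce the claim to showing that the rows of $\Sigma^I_{\gamma^\star,\theta^\star}$ indexed by $\gamma_Z$ vanish in the limit. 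Using the block-diagonal structure of $Q$ from Lemma~\ref{lemma:jointasymp}, the cross block of the internal sandwich covariance factors as $\Sigma^I_{\gamma^\star,\theta^\star} = Q_\gamma^{-1}W_{\gamma\theta}Q_\theta^{-1}$ with $Q_\gamma = \mathbb{E}_I[\partial_\gamma\phi]$, $Q_\theta = \mathbb{E}_I[\partial_\theta\psi]$, and $W_{\gamma\theta} = \mathbb{E}_I[\phi\psi^\top]$; because $Q_\theta^{-1}$ is again only a right factor, it suffices to show the $\gamma_Z$-rows of $Q_\gamma^{-1}W_{\gamma\theta}$ are zero.

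Next I would compute the two ingredients explicitly. Writing $U = (1, X^\top, Z^\top)^\top$ and $V = (1, X^\top)^\top = PU$ for the coordinate-selection matrix $P$ onto the intercept-and-$X$ block, the canonical-link maximum-likelihood scores are, up to the common dispersion constant, $\phi = U\{Y - \mu(U^\top\gamma)\}$ and $\psi = V\{Y - \mu(V^\top\theta)\}$, where $\mu = g^{-1}$. Differentiating gives $Q_\gamma = -\mathbb{E}_I[UU^\top w]$ with positive working weight $w = \mu'(U^\top\gamma^\star)$. The crux is the evaluation of $W_{\gamma\theta} = \mathbb{E}_I[UV^\top r_U r_V]$, where $r_U = Y - \mu(U^\top\gamma^\star)$ and $r_V = Y - \mu(V^\top\theta^\star)$. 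Here I would use that the full model is the data-generating GLM and hence correctly specified, so $\mathbb{E}_I[r_U \mid X, Z] = 0$; decomposing $r_V = r_U + \{\mu(U^\top\gamma^\star) - \mu(V^\top\theta^\star)\}$ and conditioning on $(X,Z)$ makes the cross term vanish, leaving $\mathbb{E}_I[r_U r_V \mid X, Z] = \mathrm{Var}_I(Y \mid X, Z) = a\, w$ by the canonical-link mean--variance identity, with $a$ the dispersion.

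The final step is purely algebraic: the above yields $W_{\gamma\theta} = a\,\mathbb{E}_I[UV^\top w] = a\,\mathbb{E}_I[UU^\top w]\,P^\top = -a\, Q_\gamma P^\top$, using $V^\top = U^\top P^\top$. Hence $Q_\gamma^{-1}W_{\gamma\theta} = -a\, P^\top$, and since $P^\top$ embeds the $V$-coordinates into the intercept-and-$X$ rows with an identically zero $Z$-block, its $\gamma_Z$-rows vanish; the same is then true of $Q_\gamma^{-1}W_{\gamma\theta}$ and of $\Sigma^I_{\gamma^\star,\theta^\star}$. The $Z$-block of the correction therefore converges to zero and $\hat\gamma_{\cond,Z} \to \hat\gamma_{Z}^{(I)}$.

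The step I expect to be the main obstacle is the evaluation of $W_{\gamma\theta}$, where the structural assumptions do all the work. The cancellation relies simultaneously on the full model being correctly specified (so that $r_U$ is conditionally mean zero) and on the canonical link (so that the score is the \emph{unweighted} residual $U r_U$ and the conditional variance equals $a\, w$). Under a non-canonical link the score carries an extra weight matrix and the variance identity fails, so $W_{\gamma\theta}$ would no longer factor through $Q_\gamma$ and the $\gamma_Z$-rows would not in general be zero; I would therefore want to state explicitly where canonicity and correct specification of the full model are used.
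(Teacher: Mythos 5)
Your proposal is correct, and it reaches the paper's conclusion by a genuinely different route. The paper's entire proof is a citation: it invokes Proposition 2 of Dai et al.\ (2016) to assert that the reduced-model MLE $(\hat\theta_0,\hat\theta_X)$ and the full-model coefficient $\hat\gamma_Z$ are asymptotically independent, concludes $\Sigma^I_{\theta,\gamma_Z}=0$, and hence that the $Z$-block of the correction $\Sigma^h_{\gamma,\theta}(\Sigma^h_\theta)^{-1}(h(\hat\theta_I)-h(\hat\theta_E))$ vanishes. You prove exactly that cross-covariance fact from first principles: writing the sandwich cross-block as $Q_\gamma^{-1}W_{\gamma\theta}Q_\theta^{-1}$, using correct specification of the full model to get $\mathbb{E}_I[r_U r_V\mid X,Z]=\mathrm{Var}_I(Y\mid X,Z)=a\,w$ via the canonical-link mean--variance identity, and then observing $W_{\gamma\theta}=-a\,Q_\gamma P^\top$ so that $Q_\gamma^{-1}W_{\gamma\theta}=-aP^\top$ has identically zero $\gamma_Z$-rows. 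The reduction step (only the $\gamma_Z$-rows of $\Sigma^I_{\gamma,\theta}$ matter, since $(\Sigma^h_\theta)^{-1}$ and $\hat\theta_I-\hat\theta_E$ enter as right factors) is also sound. What your version buys is a self-contained argument that makes explicit where canonicity and correct specification of the full model are used --- information the paper's one-line citation hides --- and your closing caveat that the cancellation fails for non-canonical links (where the score acquires an extra weight and the variance identity breaks) is a genuine insight not present in the paper. What the citation buys the paper is brevity and, potentially, slightly weaker hypotheses if Dai et al.'s proposition covers cases beyond the exactly-correctly-specified full model; under the theorem's stated setup the two arguments establish the same fact.
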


Theorem~\ref{thm:glms} implies that if the external study had estimated a misspecified logistic regression on $X$, then our conditional estimate of parameters related to $Z$ cannot be improved by conditioning on the estimated difference~$\theta_I - \theta_E$.  This behavior has been observed empirically for the constrained maximum likelihood~\citep{gu2023_synthetic,gu2021_metainference}.  Theorem~\ref{thm:glms} will be empirically demonstrated in Section~\ref{section:simulations}.



\end{example}

\begin{example}[CATE]
\label{ex:cate}
\normalfont
Consider heterogeneous effect estimation which plays a crucial role in causal inference with applications across medicine and social science. The most common target parameter in this setup is the conditional average treatment effect (CATE) function. We consider $n$ i.i.d observations $\{X_{i}, Z_i, A_{i}, Y_{i} \}_{i=1}^{n_I}$ where $A_i \in \{0,1\}$ is a binary treatment, and the data could arise from either a randomized controlled trial or an observational study.  Using the potential outcomes framework~\citep{potentialoutcome}, let $Y_{i}(a)$ denote the counterfactual outcome if the treatment is set to $a \in \{0,1\}$.   Then the conditional average treatment effect (CATE) is defined as $\Delta(x,z) =\mathbb{E} \left[Y(1) - Y(0)|X=x, Z = z\right]$.  There has been a growing interest in data integration to estimate average treatment effects~\citep{RosenmanPropensity2022, Han21012025}. There is also a growing related literature on  generalizability and transportability of results from randomized trials to target populations of interest. See~\cite{Degtiar2023} for a comprehensive review. Here we aim to leverage an external observational or experimental dataset to provide unbiased, efficient, and robust estimation of conditional average treatment effects for the internal study population. A key distinction from the prior work is the focus on the CATE given $(X,Z)$ for the internal study population where $Z$ is unobserved in the external study.

Standard assumptions of consistency, ignorability, and positivity are made for the internal study~\citep{doi:10.1080/03610929108830654, 10.1007/978-1-4612-1842-5_4}.  Specifically, ignorability is assumed given the pair $(X,Z)$.  We consider a parametric model for the CATE, i.e.,~$\Delta(x,z) = f(x,z)^\top \gamma$ where $f(x,z) \in \mathbb{R}^p$ is a set of constructed features.  When model misspecification occurs, we can still interpret the proposed linear form as an $L_2$ projection of the true causal effect onto the space spanned by a feature vector $f(x,z)$ that only depends on x and z~\cite{10.1214/19-AOAS1293}. The choice between these interpretations reflects a bias-variance trade-off. In practical applications, the projection interpretation ensures a well-defined parameter with practical interest.

For simplicity, we assume known propensities denoted by $p_I(x,z)=P_I(A = 1 | X=x,Z=z)$. Then a fully parametric approach for estimating~$\gamma$ is given by the following set of estimating equations
\begin{equation}
    \label{eq:wcls}
\sum_{i=1}^{n_I} \left(Y_i - g(X_i, Z_i)^\top \alpha - (A_i - p_I(X_i,Z_i)) f(X_i, Z_i)^\top \gamma \right)
\left(
\begin{array}{c}
     g(X_i, Z_i) \\
     (A_i - p_I(X_i,Z_i)) f(X_i,Z_i)
\end{array}
\right)
= 0
\end{equation}
where $\alpha$ is a nuisance parameter and $g(x,z)^\top \alpha$ is a working model for $\mathbb{E} [ Y | X=x,Z=z]$. It is common to assume the feature vector~$g(x,z)$ contains~$f(x,z)$~\cite{10.1214/19-AOAS1293}.  Equation~\eqref{eq:wcls} corresponds to the $\phi$ equation for the internal data in our proposed framework. For the external study, we assume~$\{X_i, A_i, Y_i\}_{i=1}^{n_E}$ are i.i.d. and consider leveraging~\eqref{eq:wcls} with $g(X)^\top \tilde \alpha$, $f(X)^\top \theta$, and the propensity $p_E (X)$ replacing their respective terms above, to give the $\psi$ estimating equations for the external study in our proposed framework.  
When the data arise from observational studies, the propensity models will not be known.  In these settings, common approaches such as sample splitting~\citep{chernozhukov2018double,semenova2021, chernozhukov2017} can be applied to estimate both nuisance functions (i.e., propensity and working models for $\mathbb{E}[Y|X=x,Z=z]$ and $\mathbb{E}[Y|X=x]$ in $\psi$ and $\phi$ respectively). 

Assumption~\ref{consistency} will hold based on the above~$\phi$ and $\psi$ if standard causal assumptions of consistency, ignorability, and positivity hold and the conditional average treatment effect given $X$ is transportable across the internal and external studies, i.e., $\Delta_I (x) = \Delta_E (x)$.  If this does not hold, model- and weighting-based methods do exist that make Assumption~\ref{consistency} hold under slightly different choice of estimating equations. See \cite{Han21012025} and \cite{huch2024dataintegrationmethodsmicrorandomized} for two recent papers that demonstrate how to alter the estimating equations to robustly identify ATEs and parametric CATEs respectively.  Interestingly, Assumption~\ref{consistency} can hold even if the standard causal assumptions do not hold.  Consider the setting where~$Z$ is an important confounder and the internal and external study data are independent and identically distributed.  Then Assumption~\ref{consistency} holds since both~$\hat \theta_E$ and $\hat \theta_I$ converge to~$\theta^\star$; however,~$\theta^\star$ has no causal interpretation.  In this case, we are simply leveraging the correlation between the difference in two non-causal parameters, i.e.,~$\hat \theta_E - \hat \theta_I$, to improve efficiency of the causal parameter~$\gamma$.

\begin{remark}[Control-only external study]
Example~\ref{ex:cate} considers the setting where the external study includes treatment.  Alternatively, control condition only external data may be available. That is the external study is simply~$\{ X_i, 0, Y_i \}$ where $A_i = 0$ for all individuals in the external study.  Using~\eqref{eq:wcls} may not improve efficiency in this case.  However, an alternative set of estimating equations for~$\psi$ can be specified as 
$$
\sum_{i=1}^{n_E} (Y_i - A_i g(X_i)^\top \theta_0 - (1-A_i) g(X_i)^\top \theta_1)
\left(
\begin{array}{c}
     A_i g(X_i) \\
     (1-A_i) g(X_i)
\end{array}
\right)
= 0,
$$
Let~$\theta = (\theta_0, \theta_1)$ and $h(\theta) = \theta_0$.  Then the above equation would correspond to the estimating equations $\psi$ in our proposed framework.  The equations $\phi$ allow for pooling information from the external study through~$\theta_0$, i.e., the conditional mean model for the control arm. We can continue to use~\eqref{eq:wcls} to estimate the conditional average treatment effect, i.e., it still corresponds to the estimating equations for $\phi$ in our proposed framework.
\end{remark}

\begin{remark}[Conditional Average Relative risks]
The linear contrast is a natural candidate when considering continuous outcomes; however, if~$Y$ is binary, then an alternative to the CATE is the conditional log-relative risk given by 
$$\Delta(x,z) = 
\log \left( \frac{\mathbb{E} \left[Y(1)|X=x, Z=z\right]}{\mathbb{E} \left[Y(0)|X=x, Z=z\right]} \right) = \log \left( \frac{\mathbb{E} \left[Y| A = 1, X=x, Z=z\right]}{\mathbb{E} \left[Y| A = 0, X=x, Z=z\right]} \right).$$  The second equality follows under standard causal assumptions of consistency, ignorability, and positivity.  The conditional log-relative risk can be estimated via the estimating equations
$$
\sum_{i=1}^{n_I} e^{-A_i f(X_i,Z_i)^\top \gamma} \left(Y_i - e^{g(X_i,Z_i)^\top \alpha + A_i f(X_i,Z_i)^\top \gamma} \right)
\left(
\begin{array}{c}
     g(X_i,Z_i) \\
     (A_i - p_I(1|X_i)) f(X_i,Z_i) 
\end{array}
\right) =0.
$$    
The above estimating equations correspond to $\phi$ in our proposed framework.  The estimating equations $\psi$ would be given by the above equation with $g(X_i)^\top \tilde \alpha$ replacing $g(X_i,Z_i)^\top \alpha$ and $f(X_i)^\top \theta$ replacing $f(X_i,Z_i)^\top \gamma$. 
Note that there is no equivalent generalized linear model formulation of the above approach.  This implies no equivalent GIM-based method (even under stronger distributional assumptions).
\end{remark}
\end{example}


\subsection{Examples involving secondary endpoints}
\label{sec:secendpts}

The three previous examples are standard data integration problems in which the observed outcome~$Y$ is the same in the external and internal studies, with the only difference being the auxiliary covariate~$Z$ in the internal study.  
We next consider three more examples that involve a secondary endpoint to demonstrate the broad applicability of our proposal beyond the traditional data integration setting. 

\begin{example}[Secondary endpoints]
\label{ex:secendpts}
\normalfont

In practice, the outcome measured in the external study may not be the outcome of interest in the internal study.  Consider the scenario where the internal study measures $(Y_1, Y_2, X,Z)$ and our primary outcome of interest is $Y_1$ while the external study only measures $(Y_2, X)$. We refer to $Y_2$ as the secondary endpoint in the internal study.  Constrained methods such as~\cite{10.1093/biomtc/ujae072} and \cite{doi:10.1080/01621459.2015.1123157} are not designed for this scenario.  Our proposal, on the other hand, can be applied directly. 

To demonstrate the utility of our proposed method for secondary endpoints, we consider the simple setting where there is no additional covariate~$Z$, and $(Y_1,Y_2) | X$ is multivariate normal with mean that is linear in $X$ (i.e., $\mu = (X^\top \gamma, X^\top \theta)$ and variances~$\sigma_j^2$ and covariance~$\sigma_1 \sigma_2 \rho$. Finally, we assume that the marginal distribution of $X$ is the same in the internal and external studies.  Under these assumptions,~$\Sigma_{\gamma, \gamma} = \sigma_1^2 \EE(X X^\top)^{-1}$ and $\Sigma_{\gamma, \theta} = \sigma_1 \sigma_2 \rho \EE(X X^\top)^{-1}$.  Therefore, the conditional estimator can be expressed simply as
\begin{equation}
    \label{eq:condapproach_secendpt}
\gamma_{\text{cond}} = \hat \gamma^{(I)} +
\frac{c}{1+c} \rho \frac{\sigma_1}{\sigma_2} \left(\hat \theta^{(I)} - \hat \theta^{(E)} \right).
\end{equation}
That is, the conditional estimator takes the internal only estimator~$\hat \gamma^{(I)}$ and scales the error in the secondary endpoint parameter estimates between internal and external studies~$\left(\hat \theta^{(I)} - \hat \theta^{(E)} \right)$ by the correlation~$\rho$ that has been properly rescaled by the ratio of variances $\sigma_1/\sigma_2$ and the relative sample sizes~$c/(1+c)$.   
\end{example}

\begin{example}[Predictive-based secondary endpoints]
\label{ex:predsecendpt}
\normalfont
Consider the scenario where the internal study did not include~$Y_2$ as a secondary endpoint. To handle this, we require additional information from the external study.  First, assume that the external data is split into two independent subsets.  The first subset is used to fit a predictive model of the outcome~$Y_2$ given covariate~$X$.  Let~$\tilde Y = f(X)$ denote this prediction.  The second subset is then used to solve the estimating equations~$\sum_{i=1}^{n_i} \psi(\tilde Y_i, X_i; \theta)$.  Assume that the analyst has access to the predictive model as well as the point estimate~$\hat \theta_E$ and variance-covariance~$\Sigma_{\hat \theta}^E$.  The data analyst then generates a secondary endpoint~$\tilde Y$ using the predictive model and performs inference in the same way as example~\ref{ex:secendpts} above.  The use of the predictive model significantly expands the scope of secondary endpoints as it allows data integration even when the internal study did not have direct access to the secondary endpoint as long as the internal study measures the necessary covariates to perform prediction.
\end{example}

\begin{example}[Missing data as a data integration problem]
\label{ex:missingdata}
\normalfont
We next discuss how our proposed framework can handle two important missing data scenarios.

In the first scenario, the data are given by $\{ X_i, R_i, R_i Y_i\}$ where $R_i$ is a missing data indicator. We assume data are missing-at-random (MAR) given the covariate $X$ and we are interested in fitting the regression model~$\psi(Y, X; \theta)$. Under correct model specification, a complete case-analysis will not be biased; however, it will be inefficient. A common approach is to consider multiple imputation. Here, we provide an alternative when we have access to an external predictive model as in Example~\ref{ex:predsecendpt} above. For simplicity, we assume no additional external information. As above, we construct a secondary endpoint~$\tilde Y$ using the external predictive model. We split the internal study into two datasets depending on whether the primary outcome is observed. The subset where the primary outcome is not observed is treated as the external study, i.e., $\{ (X_i, R_i=0, \tilde Y_i) \}$, while the subset where the primary outcome is observed is treated as the internal study, i.e., $\{ (X_i, R_i = 1, Y_i, \tilde Y_i) \}$. Under the missing-at-random assumption, Assumption~\ref{consistency} holds and the setting is identical to Example~\ref{ex:secendpts} where~$\tilde Y$ is a secondary endpoint. The asymptotic efficiency gain depends on the level of correlation between the primary outcome and the derived secondary outcome. Under the normality assumptions in Example~\ref{ex:secendpts}, if the data are MCAR, then the conditional estimator takes the same form as~\eqref{eq:condapproach_secendpt} with the term~$c/(1+c)$ being the relative amount of missing data.

In the second scenario, the data are given by $\{ X_i, R_i, R_i Z_i, Y_i\}$ where $R_i$ is a missing data indicator for the additional covariate $Z$. In this scenario, we assume the data are missing-at-random given the joint covariates $X$ and $Z$.  Then a complete case analysis using the subset of the data where both $X$ and $Z$ are observed is consistent. To improve efficiency, the analysis can split the data based on availability of the additional covariate. By taking the subset where~$Z$ is unobserved as an external study and the subset where $Z$ is observed as our internal study, Assumption~\ref{consistency} will hold and, therefore, the conditional approach can lead to efficiency gains even though the missing-at-random assumption does not hold for the external study.  
\end{example}

\section{James-Stein shrinkage estimator}
\label{sec:jsest}

Section~\ref{sec:shrinktarget} provides efficient estimation of the parameter of inference under Assumption~\ref{consistency}. When asymptotic consistency across internal and external studies is not satisfied, we may still consider~$\hat \gamma_{\text{cond}}$ from equation~\eqref{eq:newestimator} as a shrinkage target and perform data adaptive shrinkage towards the estimate. We propose a James-Stein shrinkage estimator that seeks to minimize a weighted quadratic loss~$L(\hat \gamma, \gamma) = ( \hat \gamma - \gamma)^\top A ( \hat \gamma - \gamma)$, where~$A \in \mathbb{R}^{q \times q}$ is a known, weight matrix.  Common choices of $A$ include (1) the identity to consider the mean squared error (2) the inverse of the variance-covariance matrix to consider the standardized mean square error, (3) the expected value of the design matrix to consider predictive mean square error, i.e., let $H$ be the design matrix associated with the estimating equations~$\psi$ then $A = \EE_I(H H^\top)$, and (4) a subset of the design matrix to consider predictive mean square error on a subset of the covariates (e.g., predictive CATE mean square error).  Given this loss, we propose a James-Stein shrinkage estimator
$$
\hat \gamma_{JS} = \hat w \hat \gamma_{\text{cond}} + (1 - \hat w) \hat \gamma_I
$$
with weight
\begin{equation}
\label{eq:jsweight}
\hat w = \left( 1 - \frac{\hat \tau}{n \left( \hat \gamma_{\cond} - \hat \gamma_I \right) A \left( \hat \gamma_{\cond} - \hat \gamma_I \right) } \right)_{+}    
\end{equation}
where~$(\bullet)_+$ denotes the positive part of the argument, and $\tau$ is a parameter controlling the amount of shrinkage.  The weight $\hat w$ sits between 0 and 1, with large weights indicating similarity between the conditional and internal estimates in terms of the weighted quadratic loss and thus the James-Stein estimator being set equal to the conditional estimate.  When the weight is near 0, the opposite is true (i.e., a high degree of dissimilarity) and the James-Stein estimator is set equal to the internal-only estimate.  James-Stein shrinkage estimators using external study information have been previously proposed in the literature.  Shrinkage estimators have been proposed for average treatment effects (ATEs)~\cite{10.1111/biom.13827} using the external ATEs as the shrinkage targets, as well as for linear models~\cite{10.1093/biomtc/ujae072} using constrained MLEs as the shrinkage targets.  Both focus on quadratic loss and demonstrate a risk reduction of the James-Stein estimator.  Here, we consider our shrinkage target derived from the joint asymptotics which is applicable across the wide array of problem settings described in Sections~\ref{sec:commonexs} and~\ref{sec:secendpts}. 

\begin{remark}[Relationship to differences in $\theta$]
    Under the identify transformation~$h(\theta) = \theta$, equation~\eqref{eq:newestimator} implies the difference~$\hat \gamma_{\cond} - \hat \gamma_I$ is a scaled-version of the difference between external and internal estimates, $\hat \theta_E - \hat \theta_I$. This implies that the  weights can be re-written as
    $$
    \hat w = \left( 1 - \frac{\hat \tau}{n \left( \hat \theta_E - \hat \theta_I \right) \Sigma_{\theta}^{-1} \Sigma_{\theta, \gamma} A \Sigma_{\gamma, \theta} \Sigma_{\theta}^{-1} \left( \hat \theta_E - \hat \theta_I \right)} \right)_{+}.
    $$
    This clarifies how the external data is being used within our framework.  The weighted quadratic loss is being translated to the $\theta$-scale and large-scale deviations from Assumption~\ref{consistency} will imply $\hat w \to 0$ and thus $\hat \gamma_{JS} = \hat \gamma_I$.  Under general transformations~$h$, the same intuition holds with the weighted quadratic loss being translated to the $h(\theta)$-scale. 
\end{remark}

When Assumption~\ref{consistency} does not hold, study population heterogeneity implies the JS-type estimator will not be consistent for $\gamma^\star$. To compare estimators, we closely follow~\cite{HANSEN2016115} in calculating asymptotic risks defined by
\begin{equation}
\label{eq:asymprisk}
    R(\hat \gamma, \gamma) = \lim_{\zeta \to \infty} \lim_{n \to \infty} \mathbb{E}_{I} \left [\min( n L(\hat \gamma, \gamma), \zeta) \right ].
\end{equation}
The expectation is over the internal distribution, the scaled loss trimmed at value~$\zeta$ where this term is asymptotically neglibile.  Recall the loss we consider is the weighted quadratic loss, but we define asymptotic risk generally.  

If Assumption~\ref{consistency} does not hold, then in many settings the estimator~$\hat \gamma_{JS}$ will not be consistent and~$\hat w \to^p 0$ making asymptotic risk uninteresting.  To avoid this, we will consider asymptotically local alternatives (e.g.
Newey and McFadden 1994).  Specifically, for any fixed~$n$, we consider population heterogeneity that
results in
\begin{equation}
\label{eq:pophetero}
\mathbb{E}_I[\psi(Y_i, X_i; \theta_{0,n})] = n^{-1/2} \delta.    
\end{equation}
where~$\delta$ controls the degree of population heterogeneity. For fixed~$\delta$, the heterogeneity is assumed to disappear asymptotically; however, since there is no restriction on this parameter,~\eqref{eq:pophetero} represents realistic potential heterogeneity between internal and external studies. The parameter $\theta_{0, n}$ depends on $n$ because~\eqref{eq:pophetero} represents a sequence of external study distributions indexed by $n$ that are local to the internal study distribution~\citep{Vaart_1998}.  We next state our main result that demonstrates asymptotically smaller risk compared to the internal-only estimator. 

\begin{theorem}
\label{thm:asympriskfullgeneral}
    Under asymptotically local alternatives given by~\eqref{eq:pophetero}, let
    $$
    J := \Sigma_{\theta}^{-1/2} \Sigma_{\theta, \gamma} A \Sigma_{\gamma, \theta} \Sigma_{\theta}^{-1/2}
    $$
    and~$d := tr(J)/\|J\|$ be the ratio of the trace of $J$ and the largest eigenvalue of $J$.  Then if~$d > 2$, for any~$\tau$ such that $0 < \tau \leq 2 \{ tr(J) - \|J\| \}$, 
    $$
    R(\hat \gamma_{JS}, \gamma^\star) \leq 
    R(\hat \gamma_I, \gamma^\star) - \tau \times \frac{2 \{ tr(J) - 2 \| J \| \} - \tau}{E \left[ (\Delta^\star + \tilde \delta)^\top B (\Delta^\star + \tilde \delta) \right]}.
    $$
    where~$\tilde \delta := -(M_{IE} \delta/\sqrt{c}, 0)^\top$ with~$M_{IE} = Q_{\theta,I}Q_{\theta, E}^{-1}$, $\Delta^\star \sim N(0, W^\star)$, $B = L^\top A L$, $L = \Sigma_{\gamma, \theta} \Sigma_{\theta}^{-1} P_\theta H Q_I^{-1}$ with $P_\theta = (I, 0)$ extracts the $\theta$-component of~$(\theta, \gamma)$, 
    $$
    H := \left( \begin{array}{cc}
         \nabla h(\theta)^\top & {\bf 0}_{p \times q}  \\
         {\bf 0}_{q \times p} & I_{q\times q}
    \end{array} \right), \quad \text{and} \quad
     W^\star  := \left( \begin{array}{cc}
         \frac{1}{c} M_{IE} W_{\theta, E} M_{IE}^\top + W_{\theta I} & W_{\theta, \gamma, I} \\
          W_{\theta, \gamma, I} & W_{\gamma, I}
    \end{array} \right).
    $$
\end{theorem}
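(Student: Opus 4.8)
The plan is to follow the asymptotic-risk strategy of Hansen~\cite{HANSEN2016115}: first pin down the joint limit distribution of the internal error and the shrinkage direction under the local heterogeneity~\eqref{eq:pophetero}, then express the scaled James--Stein loss as a smooth functional of that limit, and finally evaluate the resulting expectation with a Stein-type (Gaussian integration-by-parts) identity. For Step~1 I would re-derive Lemma~\ref{lemma:jointasymp} along the drifting sequence $\theta_{0,n}$ satisfying~\eqref{eq:pophetero}. Using the influence-function expansion $\sqrt{n}(\hat\theta_I - \theta^\star,\, \hat\gamma_I - \gamma^\star)^\top = -Q_I^{-1} n^{-1/2}\sum_i (\psi_i, \phi_i)^\top + o_p(1)$ for the internal estimator and the analogue for $\hat\theta_E$ with Jacobian $Q_{\theta,E}$, the local mean shift $n^{-1/2}\delta$ propagates through the score-scale matching matrix $M_{IE} = Q_{\theta,I}Q_{\theta,E}^{-1}$ and the relative-size constant $c$. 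Standard $Z$-estimation theory under contiguous alternatives~\citep{Vaart_1998} then gives that the stacked score-scale vector $\sqrt{n}(\hat\theta_E - \hat\theta_I,\, \hat\gamma_I - \gamma^\star)^\top$ converges in distribution to $\Delta^\star + \tilde\delta$, where $\Delta^\star \sim N(0, W^\star)$ carries the block covariance in the statement and $\tilde\delta = -(M_{IE}\delta/\sqrt{c},\, 0)^\top$ is the deterministic drift induced by the heterogeneity (the $\gamma$-block of the drift vanishes because $\hat\gamma_I$ remains consistent for $\gamma^\star$).

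For Step~2, write $\hat\gamma_{JS} - \gamma^\star = (\hat\gamma_I - \gamma^\star) + \hat w\,(\hat\gamma_{\cond} - \hat\gamma_I)$. Applying the delta method to~\eqref{eq:newestimator}, the scaled shrinkage direction $\sqrt{n}(\hat\gamma_{\cond} - \hat\gamma_I)$ is asymptotically the linear image $L(\Delta^\star + \tilde\delta)$ of the same limiting vector, with $L = \Sigma_{\gamma,\theta}\Sigma_\theta^{-1} P_\theta H Q_I^{-1}$ (here $H$ is the block delta-method Jacobian for $h$ and $P_\theta$ selects the $\theta$-block). Hence $n(\hat\gamma_{\cond} - \hat\gamma_I)^\top A (\hat\gamma_{\cond} - \hat\gamma_I) \to_d (\Delta^\star + \tilde\delta)^\top B (\Delta^\star + \tilde\delta)$ with $B = L^\top A L$, and the weight converges to the continuous functional $w_\infty = (1 - \tau/[(\Delta^\star + \tilde\delta)^\top B (\Delta^\star + \tilde\delta)])_+$. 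By the continuous mapping theorem the trimmed loss $\min(n L(\hat\gamma_{JS}, \gamma^\star), \zeta)$ converges in distribution to its Gaussian-limit analogue; the trimming built into the definition~\eqref{eq:asymprisk} supplies the uniform integrability needed to exchange limit and expectation, and letting $\zeta \to \infty$ recovers $R(\hat\gamma_{JS}, \gamma^\star) = \E[(U + w_\infty V)^\top A (U + w_\infty V)]$, where $U$ is the Gaussian limit of $\sqrt{n}(\hat\gamma_I - \gamma^\star)$ and $V = L(\Delta^\star + \tilde\delta)$. Since $R(\hat\gamma_I, \gamma^\star) = \E[U^\top A U]$, the risk difference reduces to $\E[2 w_\infty U^\top A V + w_\infty^2 V^\top A V]$.

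For Step~3 I would exploit that $U$ and $V$ are jointly Gaussian, being linear functions of the single underlying Gaussian $\Delta^\star$. Applying Stein's identity to the cross term $\E[w_\infty U^\top A V]$ converts the covariance between $U$ and $V$ into an expected divergence of $w_\infty V$. After whitening by $\Sigma_\theta^{-1/2}$ the governing quadratic form is controlled by $J = \Sigma_\theta^{-1/2}\Sigma_{\theta,\gamma} A \Sigma_{\gamma,\theta}\Sigma_\theta^{-1/2}$, whose trace $tr(J)$ plays the role of the effective dimension while its largest eigenvalue $\|J\|$ bounds the Stein correction. Carrying out the differentiation, and using the standard dominance argument that the positive part only improves the risk, yields $\E[2 w_\infty U^\top A V + w_\infty^2 V^\top A V] \le -\tau\,(2\{tr(J) - 2\|J\|\} - \tau)/\E[(\Delta^\star + \tilde\delta)^\top B (\Delta^\star + \tilde\delta)]$, which is strictly negative precisely when $d = tr(J)/\|J\| > 2$ and $0 < \tau \le 2\{tr(J) - \|J\|\}$, giving the stated bound.

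The main obstacle is Step~3: executing the Stein/integration-by-parts computation with a general, non-identity limiting covariance $W^\star$ and a nonzero drift $\tilde\delta$, and in particular showing that the divergence term collapses to $tr(J)$ while the bound on the correction uses only $\|J\|$. The drift $\tilde\delta$ enters the denominator but not the numerator, reflecting the familiar fact that the James--Stein improvement holds uniformly over the local-alternative drift while its magnitude attenuates as $\E[(\Delta^\star + \tilde\delta)^\top B (\Delta^\star + \tilde\delta)]$ grows with $\|\delta\|$; verifying this uniformity together with the rigorous handling of the positive-part truncation is the delicate part. I expect the algebra matching $tr(J)$ and $\|J\|$ to the whitened forms of $B$ and $W^\star$ to be the most error-prone step, but to go through along the template of~\cite{HANSEN2016115} once the joint local-alternative limit from Step~1 is established.
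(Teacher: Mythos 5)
Your proposal follows essentially the same route as the paper's proof: establish the joint Gaussian limit $\Delta^\star + \tilde\delta$ under the local alternatives, represent the scaled James--Stein loss as a functional of that limit with weight $w_\infty = (1-\tau/\xi)_+$, bound the trimmed weight by the untrimmed one, and evaluate the cross term via Gaussian integration by parts so that the divergence yields $tr(J)$ while the Rayleigh-quotient correction is controlled by $\|J\|$, finishing with Jensen's inequality to move the expectation of $\xi = (\Delta^\star+\tilde\delta)^\top B(\Delta^\star+\tilde\delta)$ into the denominator. The steps you flag as delicate (the algebra identifying $P_\gamma Q_I^{-1}W^\star Q_I^{-1}H^\top P_\theta^\top$ with $\Sigma_{\gamma,\theta}$ and the whitened form of $B$) are exactly the computations the paper carries out explicitly, so your outline is correct and matches the paper's argument.
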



Theorem~\ref{thm:asympriskfullgeneral} states that for $d > 2$, $\tau$ ranging from $0$ to $2 \{ tr(J) - 2 \|J\|\}$ yield James-Stein estimators with asymptotically smaller risk than the internal-only estimator.  The necessary condition~$d > 2$ can be seen as a general condition which is satisfied when $A$ is full rank and the dimension of $\theta$ exceeds $2$.  The optimal choice of $\tau$, denoted~$\tau^\star$, is $tr(J) - 2 \| J \|$.  Proof of Theorem~\ref{thm:asympriskfullgeneral} is provided in Appendix~\ref{appendix:theory}.  Note that this result does not assume the identical distributions of external and internal data.  The risk will depend on the differences in the sandwich covariance matrices~$Q$ and $W$. 

The estimate $\hat \tau$ in~\eqref{eq:jsweight} uses the empirical estimate~$\hat J$, which is computed using the sandwich variance estimates~$\hat Q$ and~$\hat W$ respectively.  Then the weights can be re-expressed as
$$
\hat w = \left( 1 - \frac{tr(\hat J) - 2 \| \hat J \|}{\left \{ \sqrt{n} \hat \Sigma_{\theta}^{-1/2} (\hat \theta_E - \hat \theta_I) \right \}^\top \hat J  \left \{ \sqrt{n} \hat \Sigma_{\theta}^{-1/2} (\hat \theta_E - \hat \theta_I) \right \}} \right)_{+}
$$
which shows that the weight is a function of~$\hat J$ and the standardized asymptotic difference between $\hat \theta$ for the external and internal datasets.



\subsection{Bootstrap inference}
\label{sec:boostrap}

In practice, the analyst is not only interested in estimators with lower risk, but also performing statistical inference with these estimators.  It is well known that James-Stein estimators are non-regular which means standard Z-estimation asymptotics do not apply. In this case, the nonparametric bootstrap is consistent on all but a small subset of the underlying parameter space. Modified versions of the bootstrap, such as the m-out-of-n bootstrap and the oracle bootstrap, try to solve the inconsistency of the nonparametric bootstrap under a fixed parameter setting. Recent work studied the local asymptotic behavior of the estimators and of their bootstrap distributions~\cite{10.3150/22-BEJ1538}. Similar arguments can be applied in our local asymptotic setting~\eqref{eq:pophetero} and imply dependence of the limiting distribution on~$\delta$ at non-regular points.  Here, we leverage prior work on generalized bootstrap for estimating equations~\cite{10.1214/009053604000000904} to propose a generalized bootstrap procedure for our James-Stein estimator. In Section~\ref{section:simulations}, we provide empirical results examining the finite sample local performance of the bootstrap estimators. 

First, we consider the external estimates~$(\hat \theta_E, \hat \Sigma_\theta^E)$ fixed.  Then the $k$th bootstrap internal estimator, denoted~$(\hat \theta_I^{(k)}, \hat \gamma_I^{(k)})$, is obtained by solving a weighted set of estimating equations:
$$
\sum_{i=1}^{n_I} \omega_i^{(k)} \left( \begin{array}{c} \psi(Y_i, X_i; \theta) \\ \phi(Y_i, X_i, Z_i; \gamma) \end{array} \right) = 0
$$ 
where~$\omega^{(k)}_i$ satisfies the conditions in~\cite{10.1214/009053604000000904}. In our simulations, we consider~$\omega_i^{(k)} \sim \text{Exp} (1)$. Then the $k$th bootstrap conditional estimator is obtained by applying the conditional estimator defined in~\eqref{eq:newestimator} to the bootstrap internal data. Then the $k$th bootstrap James-Stein estimator is obtained by applying the James-Stein procedure to the bootstrap estimates, i.e., $\hat \theta_{JS}^{(k)} = \hat w^{(k)} \hat \gamma_{\text{cond}}^{(k)} + (1 - \hat w^{(k)}) \hat \gamma_I^{(k)}$
with weight
$$
\hat w^{(k)} = \left( 1 - \frac{tr(\hat J) - 2 \| \hat J \|}{\left \{ \sqrt{n} \hat \Sigma_{\theta}^{-1/2} (\hat \theta_E - \hat \theta_I^{(k)}) \right \}^\top \hat J  \left \{ \sqrt{n} \hat \Sigma_{\theta}^{-1/2} (\hat \theta_E - \hat \theta_I^{(k)}) \right \}} \right)_{+}
$$
where~$\hat J$ is the empirical estimate of~$J$ and~$\hat \Sigma_\theta$ is the empirical estimate of~$\Sigma_\theta^{(I)}$ from the internal data.  Confidence intervals can then be constructed using the empirical bootstrap distribution of~$\{ \hat \theta_{JS}^{(k)}\}$.

\section{Simulations}
\label{section:simulations}

We conduct comprehensive simulation studies to evaluate the performance of our proposed conditional and James-Stein estimators across four distinct scenarios: linear models, logistic regression for binary outcomes, conditional average treatment effects (CATE), and surrogate endpoints. Each simulation demonstrates the utility of our approach in different data integration contexts.

\subsection{Linear regression}
\label{sec:linear}

We begin with linear regression to establish baseline performance of our proposed estimators. The simulation study was based on prior work by~\cite{10.1093/biomtc/ujae072}. The internal study uses covariates $X = (X_1, X_2, X_3, X_4, X_5)$ and auxiliary variables $Z = (Z_1, Z_2)$, where $X_1 \sim \text{Exp}(1)$, $(X_2, \tilde{X}_3, X_4, X_5)$ follows a multivariate normal distribution with mean 0, unit variance, and correlation 0.3, $X_3 = I(\tilde{X}_3 > 0.7 X_2)$, $Z_1 \sim N(0, 1)$, and $Z_2 \sim N(\alpha \log(X_1), 1)$. The response is generated as $Y = \beta_c + \beta_X (\sum_{j=1}^5 X_j + X_1 \cdot X_3) + \beta_Z (Z_1 + Z_2) + \beta_{XZ} X_2 \cdot Z_2 + \varepsilon$ with $\varepsilon \sim N(0, 4)$. We set $\alpha = 0.2$, $\beta_c = 0.5$, $\beta_X = 0.5$, $\beta_Z = 0.2$, and $\beta_{XZ} = 0.2$ for the internal study with sample size $n_I = 200$. The external study uses covariates $(X_1, X_2, X_4, X_5)$ with sample size $n_E = 20,000$. We introduce systematic bias in the external study relative to the internal study by applying offsets to the $\beta_X$ parameters in the internal study, ranging from 0 to 0.3 in increments of 0.025. This creates scenarios where the external and internal studies have varying degrees of population heterogeneity.

We see that the predictive mean square error (PMSE) is smaller for the conditional estimator than the internal-only estimator when the bias is small, while the James-Stein estimator has the guaranteed reduction regardless of the level of bias.  We include the GIM~\cite{10.1093/biomet/asaa014} as a comparison as it is the most recent method that can be used to integrate external summary data with a newly conducted internal study using constrained maximum likelihood that accounts for the uncertainty in the estimates from the external study.  We see equivalent performance to the conditional estimator as the model-based assumptions are correct.

\begin{figure}[!th]
    \centering
    \begin{subfigure}[t]{0.48\textwidth}
        \centering
        \includegraphics[width=\textwidth]{./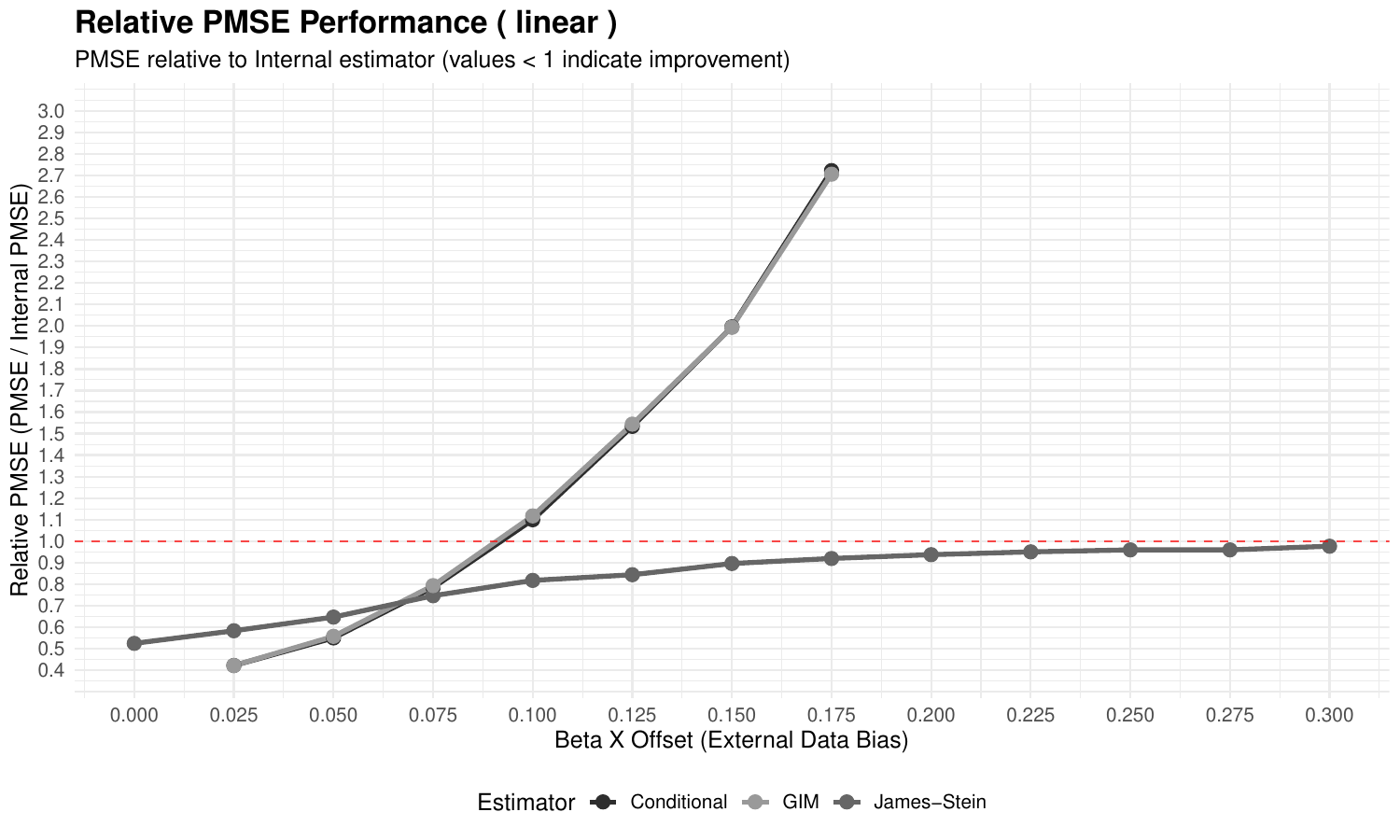}
        \caption{Relative PMSE comparison}
        \label{fig:linear_simulation_pmse}
    \end{subfigure}%
    \hfill
    \begin{subfigure}[t]{0.48\textwidth}
        \centering
        \includegraphics[width=\textwidth]{./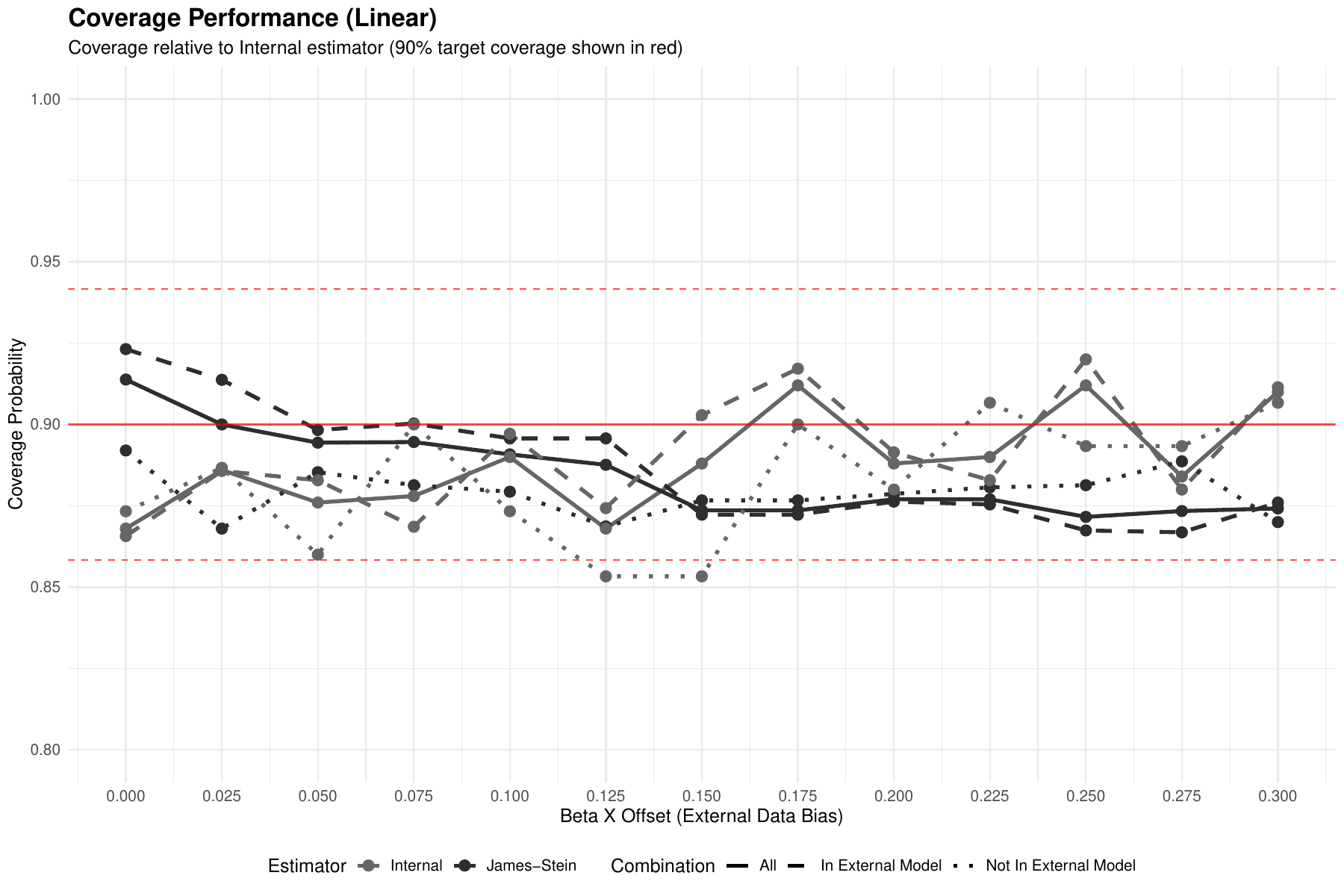}
        \caption{Coverage rate comparison}
        \label{fig:linear_simulation_coverage}
    \end{subfigure}
    \caption{Relative PMSE (left) and mean coverage rate (right) for linear regression across varying external bias.}
    \label{fig:linear_simulation_combined}
\end{figure}

The coverage plot for the linear regression scenario (right panel) shows the empirical coverage rates of the 90\% confidence intervals across varying levels of external bias. We present the mean coverage rate across 200 simulations, with one curve for the average across all parameters, one curve for the average across paramaters in the external model, and one curve for the average across paramaters not in the external model. Coverage remains close to the nominal 90\% level for small offsets, with a slight decline as the bias increases. This indicates that the proposed methods maintain appropriate uncertainty quantification when the external and internal populations are similar, but coverage can be impacted by the level of heterogeneity.

\subsection{Logistic regression}
\label{sec:logistic}

We next consider logistic regression for binary outcomes, where we employ the same covariate structure as the linear simulation but generate the response using a logistic model: $\text{logit}(P(Y = 1)) = \beta_c + \beta_X (\sum_{j=1}^5 X_j + X_1 \cdot X_3) + \beta_Z (Z_1 + Z_2) + \beta_{XZ} X_2 \cdot Z_2$. This setting is particularly challenging due to the non-collapsibility of logistic regression, which can limit the effectiveness of external information for certain parameters.  Again the predictive mean square error (PMSE) is smaller for the conditional estimator than the internal-only estimator when the bias is small, while the James-Stein estimator has the guaranteed reduction regardless of the level of bias.  We again include the GIM as a comparison as that has similar performance to the conditional estimator since the model-based assumptions are correct.  Additionally we include Firth corrected estimating equations as a comparison.  We see that the Firth corrected approaches have improved predictive mean square error.  

\begin{figure}[!th]
    \centering
    \begin{subfigure}[t]{0.48\textwidth}
        \centering
        \includegraphics[width=\textwidth]{./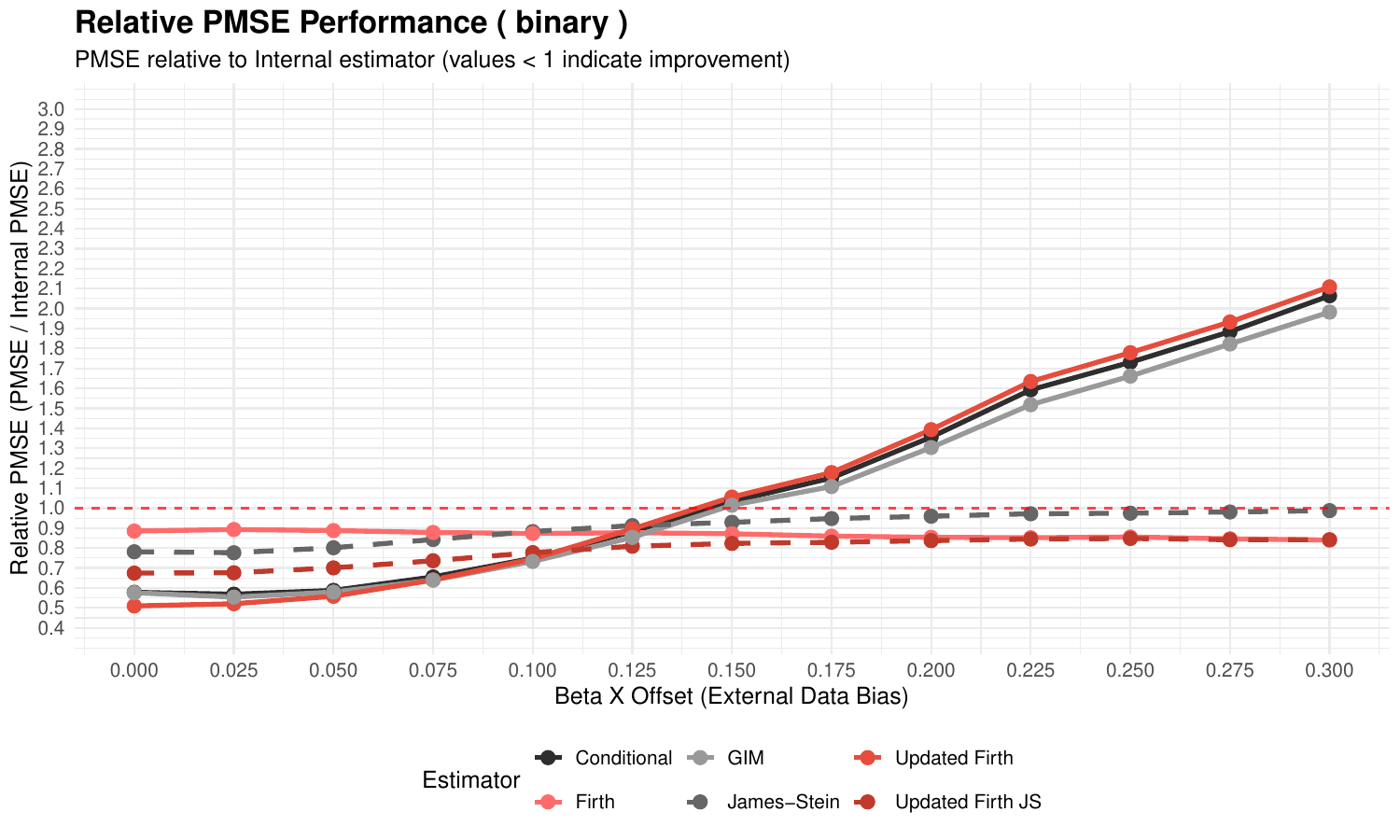}
        \caption{Relative PMSE comparison}
        \label{fig:binary_simulation_pmse}
    \end{subfigure}%
    \hfill
    \begin{subfigure}[t]{0.48\textwidth}
        \centering
        \includegraphics[width=\textwidth]{./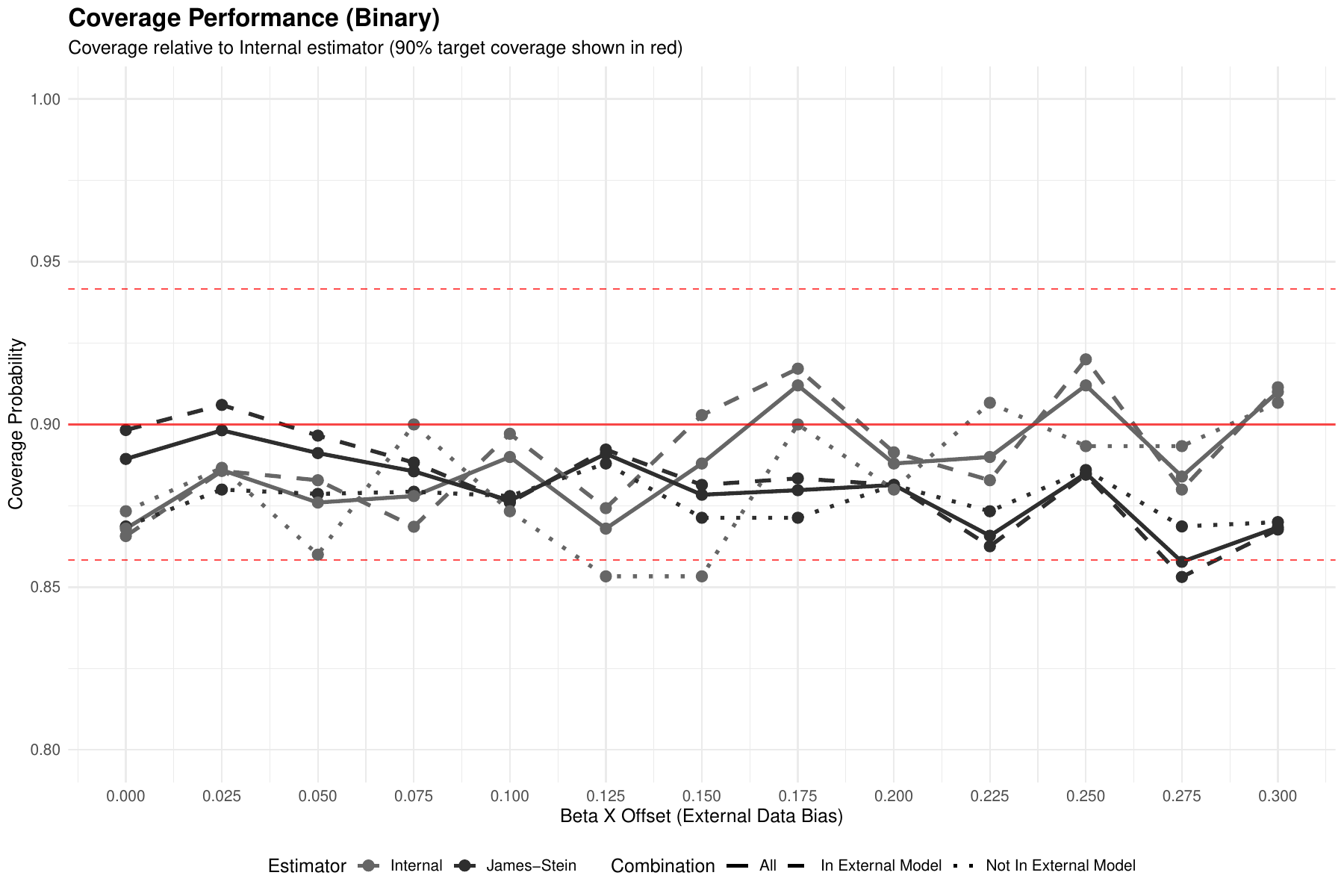}
        \caption{Coverage rate comparison}
        \label{fig:binary_simulation_coverage}
    \end{subfigure}
    \caption{Relative PMSE (left) and mean coverage rate (right) for logistic regression across varying external bias.}
    \label{fig:binary_simulation_combined}
\end{figure}

The coverage plot for the binary outcome scenario (right panel) demonstrates that the 90\% confidence intervals generally achieve nominal coverage across the range of external bias. The coverage is stable and close to 90\%, with only minor fluctuations, suggesting that the methods provide reliable interval estimates even in the presence of moderate population heterogeneity.

\subsection{Conditional average treatment effect (CATE) simulation}
\label{sec:cate}

Our next simulation evaluates the performance of our proposed estimators in causal inference settings with heterogeneous treatment effects. We generate data with binary treatment $A \in \{0,1\}$ and outcome $Y$, where the treatment assignment follows a logistic model: $\text{logit}(P(A = 1)) = \alpha_0 + \alpha_1 X_1 + \alpha_2 Z_1$. The outcome model includes both main effects and treatment interactions: $Y = \beta_0 + \sum_{j=1}^5 \beta_j X_j + \sum_{k=1}^2 \beta_{5+k} Z_k + \tau_0 A + \sum_{j=1}^2 \tau_j X_j A + \tau_3 Z_1 A + \tau_4 X_2 Z_2 A + \varepsilon$. In this example, the conditional average treatment effect is the same across internal and external studies; however, the terms that do not involve interaction with treatment indicators do exhibit heterogeneity.  
The $\psi$ estimating equations are weighted linear regressions based on equation~\eqref{eq:wcls} with control variables $g(X,Z)=(1,X_1,\ldots, X_5, Z_1, Z_2)))$ and treatment model $f(X,Z) = (1,X_1,\ldots, X_5, Z_1, X_2 Z_2)$.  The $\phi$ estimating equations also follow from equation~\eqref{eq:wcls} but now with control variables $g(X)=(1,X_1,\ldots, X_5)$ and treatment model $f(X) = (1,X_1,\ldots, X_5)$.  
In this case, Assumption~\ref{consistency} holds for the conditional treatment effect. 
We choose the weighted quadratic loss to be the predictive mean square error for the CATE.  Consistent with the fact that Assumption~\ref{consistency} holds across all scenarios, Figure~\ref{fig:cate_simulation_pmse} shows that the conditional estimator has the smallest predictive mean square error for the CATE regardless of the heterogeneity.  

\begin{figure}[!th]
    \centering
    \begin{subfigure}[t]{0.48\textwidth}
        \centering
        \includegraphics[width=\textwidth]{./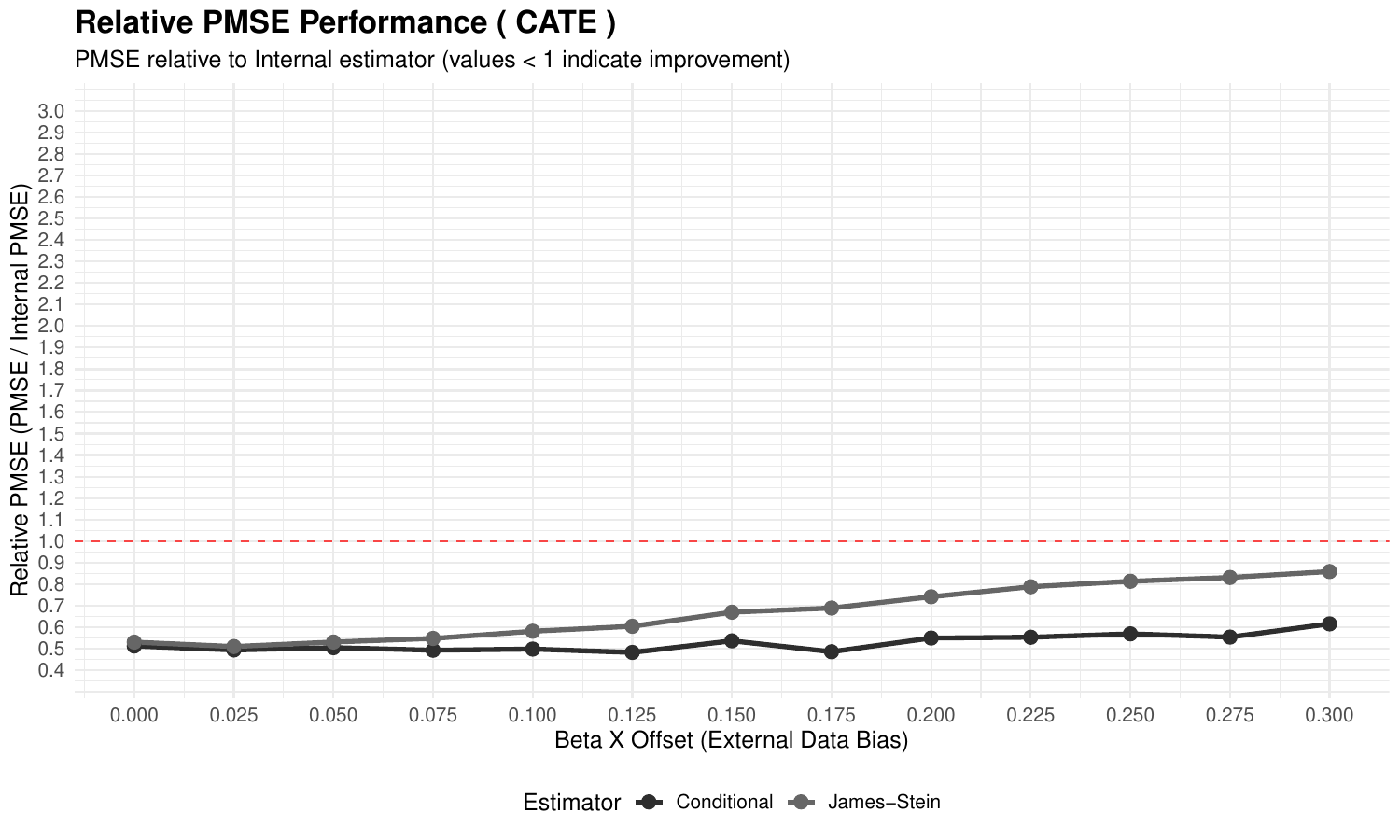}
        \caption{Relative Predictive CATE MSE comparison}
        \label{fig:cate_simulation_pmse}
    \end{subfigure}%
    \hfill
    \begin{subfigure}[t]{0.48\textwidth}
        \centering
        \includegraphics[width=\textwidth]{./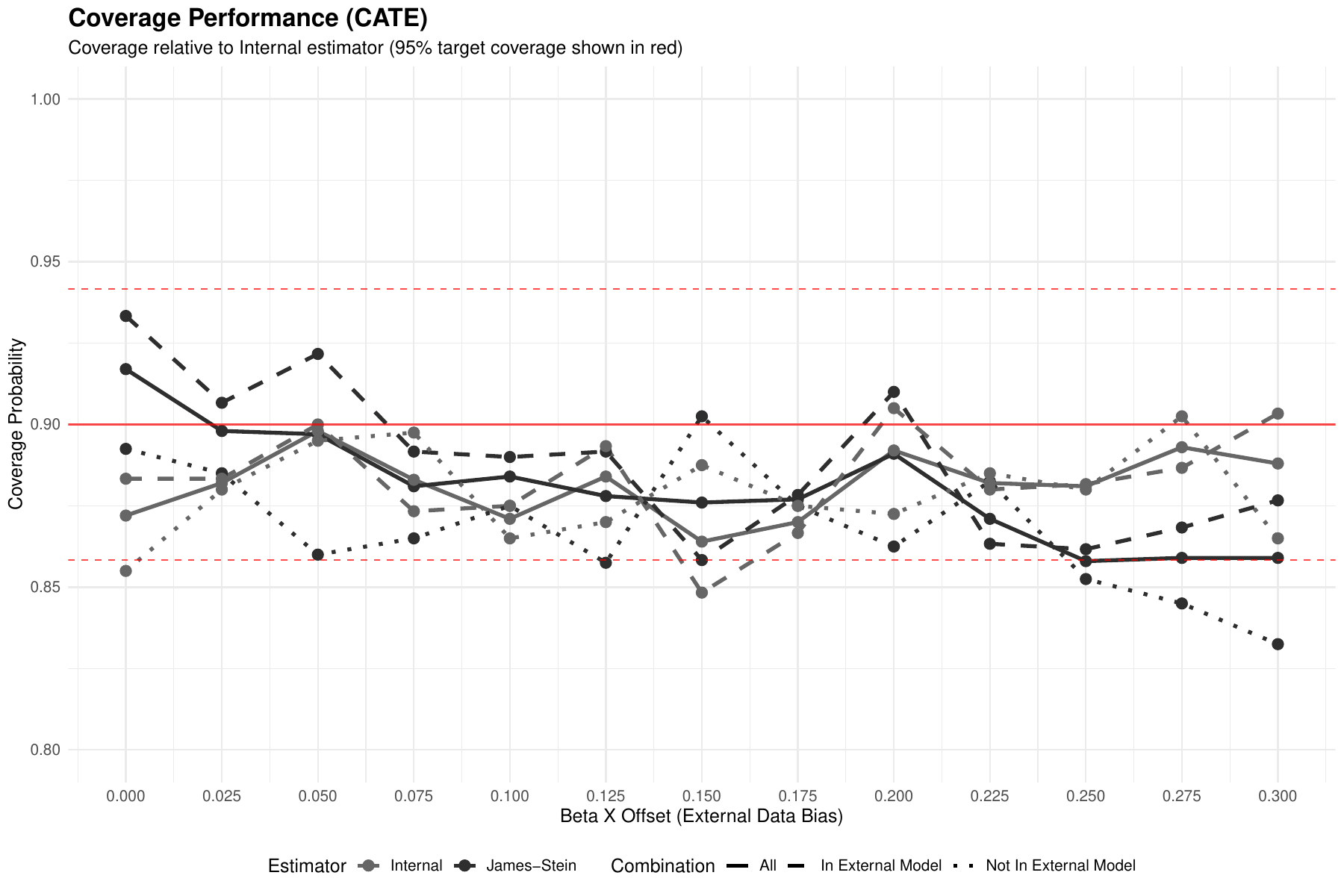}
        \caption{Coverage rate comparison}
        \label{fig:cate_simulation_coverage}
    \end{subfigure}
    \caption{Relative Predictive CATE MSE (left) and mean coverage rate (right) for CATE estimation across varying external bias.}
    \label{fig:cate_simulation_combined}
\end{figure}

The coverage plot for the CATE scenario (right panel) indicates that the empirical coverage of the 90\% confidence intervals is slightly below nominal for larger offsets, but remains reasonably close to the target level across most scenarios. This suggests that the methods provide adequate uncertainty quantification for heterogeneous treatment effect estimation, though some undercoverage may occur as external bias increases.

\subsection{Surrogate endpoint simulation}

The surrogate endpoint simulation explores scenarios where the external study measures a different outcome than the internal study. We generate bivariate outcomes $(Y_1, Y_2)$ from a multivariate normal distribution with the same mean as in mean model as in Section~\ref{sec:linear} and covariance matrix $\Sigma$ with variances $\sigma_1^2, \sigma_2^2$ and correlation $\rho$. The internal study measures both outcomes $(Y_1, Y_2)$ while the external study only measures $Y_2$. Our primary interest is in estimating the regression parameters when $Y_1$ is the outcome of interest, but we leverage external information about the surrogate to improve efficiency.

We conduct simulations across four correlation levels: $\rho \in \{0.7, 0.8, 0.9, 1.0\}$, representing varying degrees of association between the primary and surrogate outcomes. Higher correlations indicate stronger surrogate relationships, which should lead to greater efficiency gains from external information.  Figure~\ref{fig:surrogate_simulation} shows the results, which demonstrate that the conditional and James-Stein estimators achieve PMSE gains. Results show that efficiency gains increase with the correlation between outcomes, as expected. When $\rho = 1.0$, the surrogate is perfectly correlated with the primary outcome, leading to maximum efficiency gains and mirroring the linear regression simulations from Section~\ref{sec:linear}. Even more moderate correlations ($\rho = 0.7$) provide meaningful improvements, demonstrating the practical utility of our approach in settings with imperfect surrogates. Coverage for the surrogate endpoint scenario is not shown, as it is equivalent to the linear case and does not provide additional insight.

\begin{figure}[!th]
     \centering
     \includegraphics[width=0.8\textwidth]{./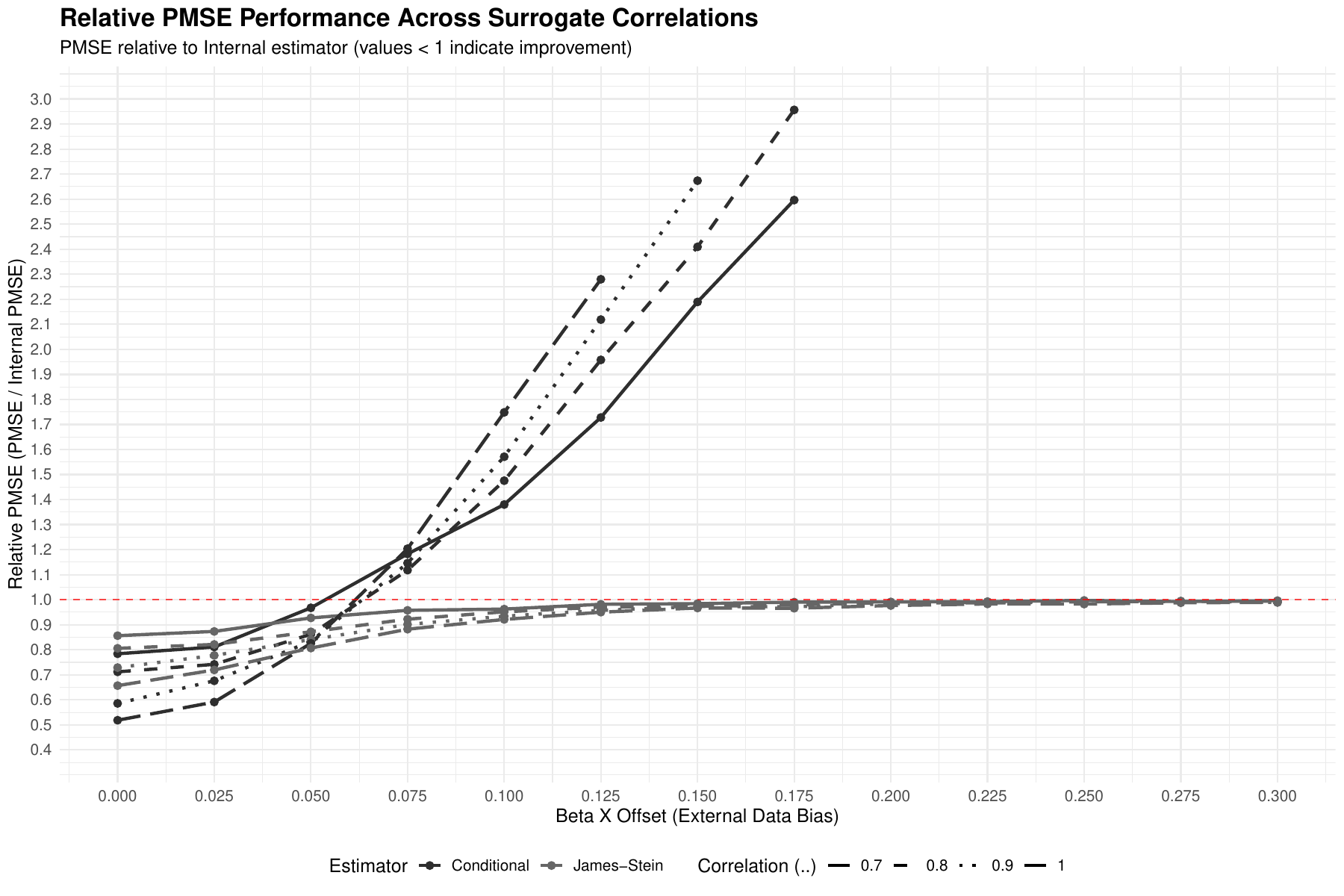}
     \caption{Relative PMSE comparison between internal-only, conditional, and James-Stein estimators for surrogate endpoint.}
     \label{fig:surrogate_simulation}
\end{figure}

\section{Intern Health Study: A Case Study}

The Intern Health Study (IHS) is a 6-month micro-randomized trial on medical interns (NeCamp et al., 2020), which aimed to investigate when to provide mHealth interventions to individuals in stressful work environments to improve their behavior and mental health. The study annually recruits medical interns who are about to start their internship program.  In this paper, we focus on data collected in the 2023 and 2024 cohorts, consisting of 1253 and 859 individuals, respectively.  The analyses conducted in this paper focus on daily randomization to receive a potential afternoon message with probability 1/2. We are interested in assessing the effectiveness of targeted notifications in improving proximal physical activity. To this end, we focus on step count in the 24 hours after randomization as the longitudinal outcome of interest.

We start by assuming the 2023 IHS data is the external study and use the causal analysis discussed in Example~\ref{ex:cate} to estimate the potentially time-varying causal effect of daily messaging on daily step count.  We assume the external model $\phi$ includes the following control variables: Neuroticism (baseline), Steps (3 Hours Before), Weekend Indicator, and a linear spline model over days in study with knots at 30 and 60. All variables except Neuroticism are used in the time-varying treatment effect model.  We then assume the 2024 IHS data is the internal study and again use the causal analysis discussed in Example~\ref{ex:cate}.  The internal model $\psi$ uses the same set of control variables and treatment effect model but includes a new variable ``Minutes Asleep (Last 24 Hours)'' which was not available in the 2023 data. Figure~\ref{fig:both2023and2024} shows the causal effect estimates for internal only, conditional, and James-Stein approaches. We see that the weights are close to 1 and therefore, the James-Stein estimator is close to the internal only estimator in this setting.  

The previous results are driven by the differences among the two cohorts as well as the large sample size of the 2024 study. To further evaluate our method in settings with larger external studies and smaller internal studies, we consider the 2024 IHS data alone and split it into synthetic external and internal studies by randomly splitting individual-level data (90\% external and 10\% internal respectively).  This mimics the data integration setting for micro-randomized trials more broadly~\cite{huch2024dataintegrationmethodsmicrorandomized}. Figure~\ref{fig:only2024} shows the causal effect estimates for internal only, conditional, and James-Stein approaches in this setting. We see that the weights are no longer close to 1 and the James-Stein estimator is a weighted average of the internal only and conditional estimators.  
\begin{figure}[htbp]
\centering
\begin{subfigure}[b]{0.48\textwidth}
\centering
\includegraphics[width=\textwidth]{./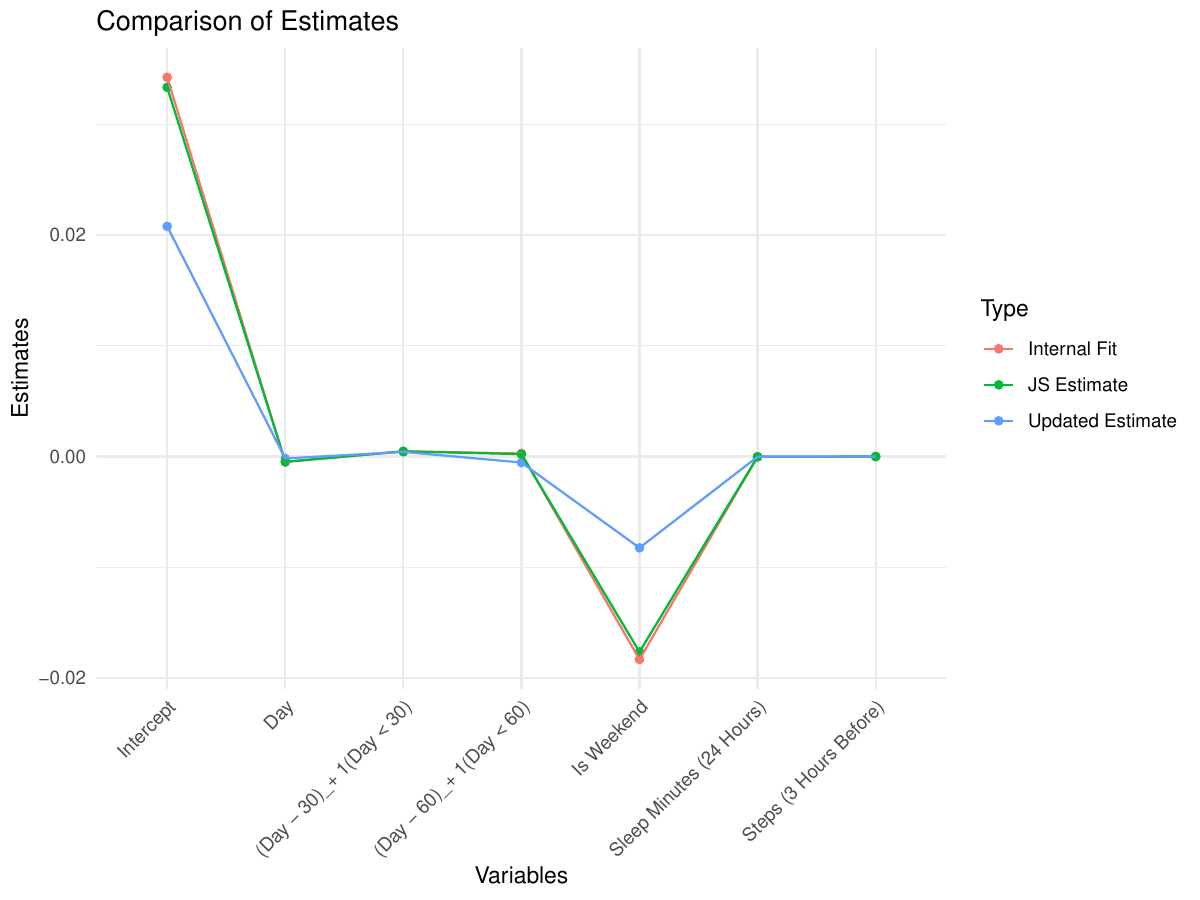}
\caption{Combined 2023 and 2024 estimates}
\label{fig:both2023and2024}
\end{subfigure}
\hfill
\begin{subfigure}[b]{0.48\textwidth}
\centering
\includegraphics[width=\textwidth]{./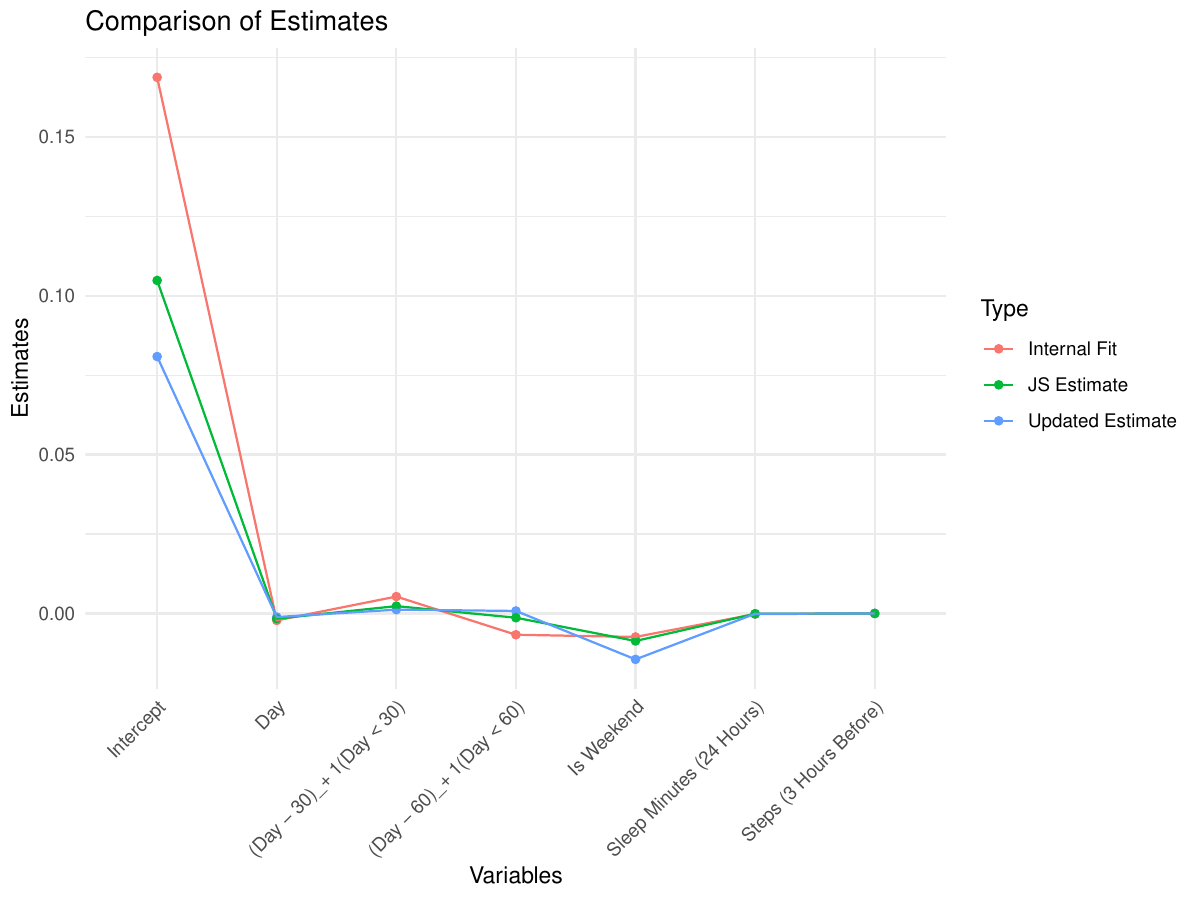}
\caption{2024 estimates only}
\label{fig:only2024}
\end{subfigure}
\caption{Estimation results for the Intern Health Study case study showing treatment effect estimates across different time periods.}
\label{fig:ihs_estimates}
\end{figure}

Based on Figure~\ref{fig:only2024}, we then consider inference using the bootstrap approach discussed in Section~\ref{sec:boostrap}.  Figure~\ref{fig:cibs_simulation} focuses on the confidence intervals for the intercept and weekend indicators of the treatment effect model. We see that the internal only estimator has the widest confidence intervals, while the conditional estimator has the narrowest intervals.  The James-Stein estimator is a weighted average of the two, and this is also reflected in the width of the confidence intervals. 

\begin{figure}[!ht]
    \centering
    \includegraphics[width=0.7\textwidth]{./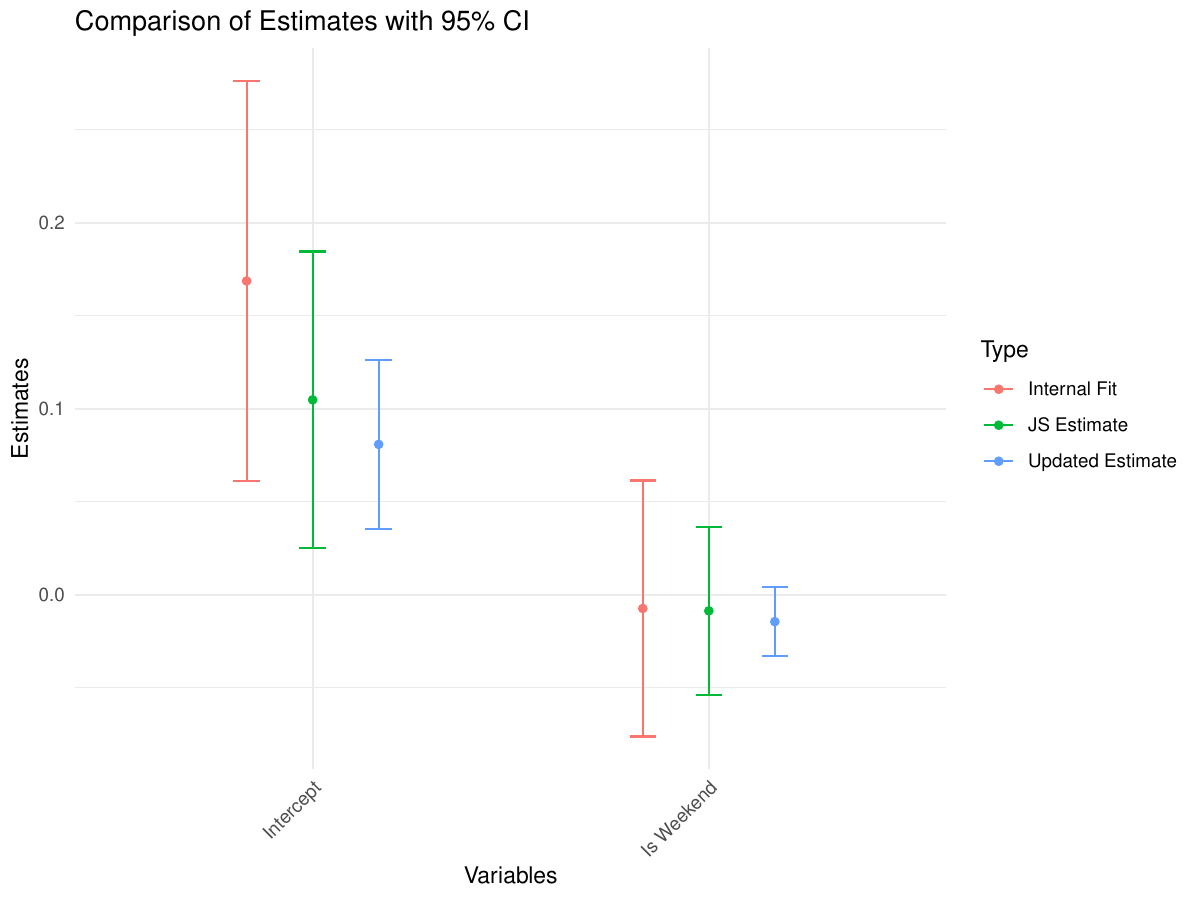}
    \caption{Confidence Intervals for the intercept and weekend indicators of the treatment effect model based on the bootstrap approach discussed in Section~\ref{sec:boostrap}.}
    \label{fig:cibs_simulation}
\end{figure}

\appendix

\section{Appendix 1}
\label{appendix:theory}

\begin{proof}[Proof of Theorem~\ref{thm:glms}]
    By Proposition 2 in Dai et al. (2016), the maximum likelihood estimators $(\hat \theta_0,\hat \theta_X)$ and $\hat \beta_Z$ are asymptotically independent for the internal dataset.  
    This implies that $\Sigma^{I}_{\theta, \gamma_Z}$ is zero.  Therefore,~$\Sigma_{\gamma_Z,\theta}^h \left( \Sigma_\theta^h \right)^{-1}$ is zero, resulting in $\hat \gamma_{\cond, Z} = \hat \gamma_{I,Z}$ as desired.
\end{proof}

\begin{proof}[Proof of Theorem~\ref{thm:asympriskfullgeneral}]
We start by showing that the asymptotic risk for the internal-only estimator~$\hat \gamma_I$ is
$$
R(\hat \gamma_I, \gamma^\star) = E \left( \Delta_{\gamma}^\top Q^{-1}_{\gamma} A Q^{-1}_{\gamma} \Delta_{\gamma} \right) 
$$
where~$\Delta_{\gamma} \sim N(0, W_{\gamma})$. To do so, we leverage the following result from~\cite{HANSEN2016115}.
\begin{lemma}[Lemma 1~\cite{HANSEN2016115}]
\label{lemma:hansen}
    For any estimator $\hat \theta$ satisfying $\sqrt{n} (\hat \theta - \theta) \to^d \Delta$ as $n \to \infty$ for some random variable~$\Delta$, and for the weighted quadratic loss $l(\hat \theta, \theta^\star) = (\hat \theta - \theta^\star)^\top A (\hat \theta - \theta^\star)$, the asymptotic risk for $\hat \theta$ is $R(\hat \theta, \theta^\star) = E \left( \Delta^\top A \Delta \right)$.
\end{lemma}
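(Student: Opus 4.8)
The plan is to unpack the definition of asymptotic risk in~\eqref{eq:asymprisk} and reduce the claim to a routine application of the continuous mapping theorem combined with the portmanteau characterization of weak convergence. Write $T_n := \sqrt{n}(\hat\theta - \theta^\star)$, so that by hypothesis $T_n \to^d \Delta$, and observe that the scaled loss factors through $T_n$ as a continuous function:
\begin{equation*}
n\, l(\hat\theta, \theta^\star) = n (\hat\theta - \theta^\star)^\top A (\hat\theta - \theta^\star) = T_n^\top A T_n = g(T_n), \qquad g(t) := t^\top A t.
\end{equation*}
Since the weight matrix $A$ is positive semidefinite, $g \ge 0$, which is what will let the trimming collapse cleanly at the end.

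The core step is the inner limit in $n$ for a \emph{fixed} trimming level $\zeta$. For each $\zeta > 0$ define the bounded continuous map $g_\zeta(t) := \min(g(t), \zeta)$, taking values in $[0,\zeta]$. By the continuous mapping theorem $g_\zeta(T_n) \to^d g_\zeta(\Delta)$, and because $g_\zeta$ is bounded and continuous the portmanteau theorem delivers convergence of the expectations themselves,
\begin{equation*}
\lim_{n\to\infty} \mathbb{E}_I\!\left[\min(n\, l(\hat\theta,\theta^\star), \zeta)\right] = \lim_{n\to\infty}\mathbb{E}_I[g_\zeta(T_n)] = \mathbb{E}\!\left[g_\zeta(\Delta)\right] = \mathbb{E}\!\left[\min(\Delta^\top A \Delta, \zeta)\right].
\end{equation*}
This is exactly where the trimming built into~\eqref{eq:asymprisk} does its work: weak convergence of $T_n$ alone does not transfer to the possibly unbounded moment $\mathbb{E}[T_n^\top A T_n]$ without a separate uniform-integrability argument, but truncating at $\zeta$ renders the integrand bounded so the limit passes through the expectation without any tail control on $T_n$.

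Finally I would send $\zeta \to \infty$. Using $A \succeq 0$, the quantities $\min(\Delta^\top A\Delta, \zeta)$ increase monotonically to $\Delta^\top A \Delta$, so the monotone convergence theorem gives
\begin{equation*}
R(\hat\theta, \theta^\star) = \lim_{\zeta\to\infty} \mathbb{E}\!\left[\min(\Delta^\top A\Delta, \zeta)\right] = \mathbb{E}\!\left[\Delta^\top A \Delta\right],
\end{equation*}
which is the claimed identity. The only genuine subtlety is the interchange of the $n$-limit with the expectation in the middle step; the definition of asymptotic risk is engineered precisely so that this reduces to boundedness plus portmanteau rather than demanding convergence of moments, so I do not anticipate any further obstacle beyond recording the positive-semidefiniteness of $A$ that the two limiting steps quietly use.
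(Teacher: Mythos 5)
Your proof is correct, and it coincides with the paper's own treatment: the paper states this result without proof, importing it as Lemma~1 of \cite{HANSEN2016115}, and your argument (continuous mapping plus portmanteau for the $\zeta$-truncated loss at fixed $\zeta$, then monotone convergence as $\zeta \to \infty$) is precisely the argument given in that source. The only hypothesis you invoke beyond the stated convergence in distribution is positive semidefiniteness of $A$, which the paper assumes implicitly (every choice of $A$ it lists is positive semidefinite), so nothing further is needed.
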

For the estimating equations, we have 
$$
\sqrt{n_I} ( \hat \gamma_I - \gamma^\star) = Q_{\gamma,I}^{-1} \frac{1}{\sqrt{n_I}} \sum_{i=1}^{n_I} \phi (Y_i, X_i, Z_i; \gamma_0) + o_p (1) \to^d Q_{\gamma,I}^{-1} \Delta_{\gamma,I}
$$
where~$\Delta_{\gamma,I} \sim N(0, W_{\gamma,I})$ where
$$
    W_{\gamma,I} := \mathbb{E}_I \left[ \phi (Y, X, Z; \gamma_0) \phi (Y, X, Z; \gamma_0)^\top \right], \quad
    Q_{\gamma,I} := \mathbb{E}_I \left[ \frac{\partial}{\partial \gamma} \phi (Y, X, Z; \gamma_0) \right],
$$
Then the risk follows by Lemma~\ref{lemma:hansen}.

For the conditional estimator, we start with a similar Taylor series argument.  
    \begin{equation*}
    \begin{split}
    0 =&\frac{1}{\sqrt{n_E}} \sum_{i=1}^{n_E} \psi(Y_i,X_i; \theta_0) + 
    \frac{1}{{n_E}} \sum_{i=1}^{n_E} \frac{\partial}{\partial \theta} \psi(Y_i,X_i; \bar \theta) \sqrt{n_E} \left( \hat \theta_E - \theta_0 \right)    \\
    0 =&\frac{1}{\sqrt{n_I}} \sum_{i=1}^{n_I} \psi(Y_i,X_i; \theta_0) + 
    \frac{1}{{n_I}} \sum_{i=1}^{n_I} \frac{\partial}{\partial \theta} \psi(Y_i,X_i; \bar \theta) \sqrt{n_I} \left( \hat \theta_I - \theta_0 \right)    \\
    \Rightarrow \sqrt{n_I} \left( h(\hat \theta_I) - h(\theta_0)\right) &= \nabla h(\theta)^\top \left( Q_{\theta, I} \right)^{-1} \frac{1}{\sqrt{n_I}} \sum_{i=1}^{n_I} \psi(Y_i,X_i; \theta_0) + o_p (1) \\
    \Rightarrow \sqrt{n_E} \left( h(\hat \theta_E) - h(\theta_0) \right) &= \nabla h(\theta)^\top \left( Q_{\theta, E} \right)^{-1} \left( \frac{1}{\sqrt{n_E}} \sum_{i=1}^{n_E} ( \psi(Y_i,X_i; \theta_0) - E[ \psi (Y_i, X_i; \theta_0) ]) + \delta \right) + o_p (1) \\
    \Rightarrow \sqrt{n} \left( h(\hat \theta_E) - h(\hat \theta_I) \right) &\to^d \nabla h(\theta)^\top \left \{ \left( \sqrt{c} Q_{\theta, E} \right)^{-1} \left( \Delta_{\theta}^E + \delta \right) - \left( Q_{\theta,I} \right)^{-1} \Delta_{\theta}^I \right \}.
    \end{split}
    \end{equation*}
    where~$\Delta_{\theta}^E \sim N(0, W_{\theta}^E)$ and similarly~$\Delta_{\theta}^I \sim N(0, W_{\theta}^I)$.

Recall we define~$M_{IE} = Q_{\theta,I} Q_{\theta,E}^{-1}$.  Then we can simplify and we have 
    $$ \sqrt{n} ( h(\hat \theta_I) - h(\hat \theta_E) ) \to^d 
    \nabla h(\theta)^\top Q_{\theta, I}^{-1} \left \{ \Delta_\theta^I - \frac{1}{\sqrt{c}} M_{IE} \Delta_{\theta}^E - \frac{\delta}{\sqrt{c}} M_{IE} \right \}
    $$
    Let~
    $$
    \Delta^\star = \left(  \begin{array}{c}
         \Delta_{\theta}^I - \frac{1}{\sqrt{c}} M_{IE} \Delta_{\theta}^E\\
         \Delta_{\gamma}^I 
    \end{array} \right) 
    $$
    The covariance of $\Delta^\star$ is given by
    $$
    W^\star  = \left( \begin{array}{cc}
         \frac{1}{c} M_{IE} W_{\theta, E} M_{IE}^\top + W_{\theta I} & W_{\theta, \gamma, I} \\
          W_{\theta, \gamma, I} & W_{\gamma, I}
    \end{array} \right)
    $$    
    Then we can derive the joint distribution
    $$
    \sqrt{n} \left( 
    \begin{array}{c}
        h(\hat \theta_I) - h(\hat \theta_E) \\
        \hat \gamma_I - \gamma_0
    \end{array}
    \right) \to^d
    \left( \begin{array}{cc}
         \nabla h(\theta)^\top & {\bf 0}_{p \times q}  \\
         {\bf 0}_{q \times p} & I_{q\times q}
    \end{array} \right)
    Q_{I}^{-1} \left \{ \Delta^\star - \left(
    \begin{array}{c}
         \delta/\sqrt{c} M_{IE} \\
         {\bf 0}
    \end{array}
    \right)
    \right\}.
    $$
    We can extract components using~$P_{\theta} = (I_{p \times p},0_{p \times q}) \in \mathbb{R}^{p \times (p+q)}$ and $P_{\gamma} = (0_{q \times p}, I_{q \times q}) \in \mathbb{R}^{q \times (p+q)}$.  Then
    \begin{align*}
    \sqrt{n} (\hat \gamma_{\text{cond}} - \gamma_0) &=
    \sqrt{n} (\hat \gamma_I - \Sigma_{\gamma, \theta} ( \Sigma_{\theta})^{-1} ( \hat \theta_I - \hat \theta_E ) - \gamma_0) \\
    &= \sqrt{n} (\hat \gamma_I - \gamma_0) + \Sigma_{\gamma, \theta} ( \Sigma_{\theta})^{-1} \sqrt{n} ( \hat \theta_E - \hat \theta_I ) \\
    &\to^d P_{\gamma} Q_I^{-1} \Delta^\star - \Sigma_{\gamma, \theta} (\Sigma_{\theta})^{-1} P_\theta \left( \begin{array}{cc}
         \nabla h(\theta)^\top & {\bf 0}_{p \times q}  \\
         {\bf 0}_{q \times p} & I_{q\times q}
    \end{array} \right) Q_I^{-1} \left[ \Delta^\star -
    \left(
    \begin{array}{c}
         M_{IE} \delta/\sqrt{c} \\
         {\bf 0} 
    \end{array}
    \right)
    \right]
    \end{align*}
    Since~$P_\gamma Q_I^{-1} \Delta^\star = Q_{\gamma,I}^{-1} \Delta^I_\gamma$, 
    this implies that 
    $$
    \sqrt{n} (\hat \gamma_{\text{cond}} - \hat \gamma_I) \to^d
    -\Sigma_{\gamma, \theta} (\Sigma_{\theta})^{-1} P_{\theta} 
    \left( \begin{array}{cc}
         \nabla h(\theta)^\top & {\bf 0}_{p \times q}  \\
         {\bf 0}_{q \times p} & I_{q\times q}
    \end{array} \right)
    Q_I^{-1} \left[ \Delta^\star -
    \left(
    \begin{array}{c}
         M_{IE} \delta/\sqrt{c} \\
         {\bf 0} 
    \end{array}
    \right)
    \right].
    $$
    To limit notation, we will define
    $$
    H := \left( \begin{array}{cc}
         \nabla h(\theta)^\top & {\bf 0}_{p \times q}  \\
         {\bf 0}_{q \times p} & I_{q\times q}
    \end{array} \right)
    $$
    This leads to 
    $$
    n (\hat \gamma_{\text{cond}} - \hat \gamma_I)^\top A (\hat \gamma_{\text{cond}} - \hat \gamma_I) \to^d 
    \left[ \Delta^\star - 
    \left(
    \begin{array}{c}
         M_{IE} \delta/\sqrt{c} \\
         {\bf 0} 
    \end{array}
    \right)
    \right]^\top 
    B
    \left[ \Delta^\star - 
    \left(
    \begin{array}{c}
         M_{IE} \delta/\sqrt{c} \\
         {\bf 0} 
    \end{array}
    \right)
    \right] := \xi
    $$
    where~$L = \Sigma_{\gamma, \theta} (\Sigma_{\theta})^{-1} P_{\theta} H Q_I^{-1}$ and $B = L^\top A L$.  
    Therefore, we have $\hat w \to^d w = (1 - \tau/\xi)_{+}$, and thus
    \begin{align*}
    \sqrt{n} ( \hat \gamma_{JS} - \gamma_I ) &\to^d 
    w Q_{\gamma,I}^{-1} \Delta_{\gamma}^I + (1-w) 
    \left( 
    P_{\gamma} Q_I^{-1} \Delta^\star - \Sigma_{\gamma, \theta} (\Sigma_{\theta})^{-1} P_{\theta} H Q_I^{-1} \left[ \Delta^\star - 
    \left(
    \begin{array}{c}
         M_{IE} \delta/\sqrt{c} \\
         {\bf 0} 
    \end{array}
    \right)
    \right]
    \right) := \xi_{JS}
    \end{align*}  
    Define the random variable~$\xi_{JS}^\star$ that does not trim the weights
    \begin{align*}
    \xi_{JS}^\star  &= (1-\tau/\xi) Q_{\gamma,I}^{-1} \Delta^I_{\gamma} + \tau/\xi 
    \left( 
    P_{\gamma} Q_I^{-1} \Delta^\star - \Sigma_{\gamma, \theta} (\Sigma_{\theta})^{-1} P_{\theta} H Q_I^{-1} \left[ \Delta^\star - 
    \left(
    \begin{array}{c}
         M_{IE} \delta/\sqrt{c} \\
         {\bf 0} 
    \end{array}
    \right)
    \right]
    \right) \\
    &= P_{\gamma} Q_I^{-1} \Delta^\star - \tau/\xi 
    L \left[
    \Delta^\star - 
    \left(
    \begin{array}{c}
         M_{IE} \delta/\sqrt{c} \\
         {\bf 0} 
    \end{array}
    \right)
    \right].
    \end{align*}
    By Lemma~\ref{lemma:hansen}, the risks of both estimators are expectations of quadratic forms.  Moreover, by Lemma 2 in Hansen (2015), we have
    $$
    R(\hat \gamma_{JS}, \gamma^\star) = E ( \xi_{JS}^\top A \xi_{JS} )
    < E ( (\xi^\star_{JS})^\top A \xi_{JS}^\star ),
    $$
    so we can study the risk of the untrimmed estimator.
    To simplify notation, define $\tilde \delta := - \left(
    \begin{array}{c}
         M_{IE} \delta/\sqrt{c} \\
         {\bf 0} 
    \end{array}
    \right)$.  Then we have 
    \begin{align*}
        E \left( (\xi_{JS}^\star)^\top A \xi_{JS}^\star \right) =&  
        R( \hat \gamma_I, \gamma_0) + \tau^2 E \left \{ \frac{(\Delta^\star + \tilde \delta )^\top L^\top A L ( \Delta^\star + \tilde \delta)}{\xi^2} \right \} \\
        &- 2 \tau E \left \{ \frac{(\Delta^\star+\tilde \delta) L^\top A P_\gamma Q_I^{-1}\Delta^\star}{\xi} \right \} \\
        =& R( \hat \gamma_I, \gamma_0) + \tau^2 E \left \{ 1/\xi \right \} 
        - 2 \tau E \left \{ \frac{(\Delta^\star+\tilde \delta) L^\top A P_\gamma Q_I^{-1}\Delta^\star}{\xi} \right \} 
    \end{align*}
   Then we can re-write the final term as
    $$
    \underbrace{\frac{(\Delta^\star + \tilde \delta)}{(\Delta^\star + \tilde \delta)^\top B (\Delta^\star + \tilde \delta)}}_{g(\Delta^\star + \tilde \delta)} \cdot \underbrace{(L^\top A P_{\gamma} Q_I^{-1})}_{K} \Delta^\star := g(\Delta^\star + \tilde \delta) K \Delta^\star.
    $$
    Let~$\phi_{\Delta}(z)$ denote the density for $\Delta^\star$. Then 
    \begin{align*}
    &E \left \{ g(\Delta^\star + \tilde \delta)^\top K \Delta^\star \right \} \\
    = 
    &\int g(\Delta^\star + \tilde \delta) K \bfz \phi_{\Delta^\star} (\bfz) d\bfz \\
    =&- \int g(\Delta^\star + \tilde \delta) K W^\star d \phi_{\Delta^\star} (\bfz) \\ 
    =&\int \text{tr} \left \{ \frac{d}{d\bfz} g(\Delta + \tilde \delta) K W \right \} \phi_{\Delta} (\bfz) d\bfz \\
    =& \int \text{tr} \left \{
    \frac{K W^\star}{(\bfz + \tilde \delta)^\top B (\bfz + \tilde \delta)} - 
    \frac{2 B (\bfz + \tilde \delta)
    (\bfz + \tilde \delta)^\top K W^\star
    }{\{(\bfz + \tilde \delta)^\top B (\bfz + \tilde \delta)\}^2}
    \right \} \phi_{\Delta} (\bfz) d\bfz \\
    =&E \text{tr} \left \{
    \frac{K W^\star}{\xi} \right\}
    -2 E tr \left\{
    \frac{B (\Delta^\star + \tilde \delta)
    (\Delta^\star + \tilde \delta)^\top K W^\star
    }{\xi^2}
    \right \}.
    \end{align*}
where the fourth equality follows integration by parts.  Then studying the first term, we have
\begin{align*}
tr(K W^\star) &= tr \left \{L^\top A P_\gamma Q_I^{-1} W^\star \right \}\\
&= tr \left \{  \underbrace{P_\gamma Q_I^{-1} W^\star Q_I^{-1} H^\top P_{\theta}^\top}_{\Sigma_{\theta, \gamma}^\top} \Sigma_\theta^{-1}\Sigma_{\theta, \gamma} A \right \}\\
&= tr \left \{ \Sigma_\theta^{-1/2} \Sigma_{\theta, \gamma} A \Sigma_{\theta, \gamma} \Sigma_{\theta}^{-1/2} \right \} \\
&= tr \left \{ J \right \},
\end{align*}
where~$J := \Sigma_\theta^{-1/2} \Sigma_{\theta, \gamma} A \Sigma_{\gamma, \theta} \Sigma_{\theta}^{-1/2}$ by definition.  
We show the equality of the underbraces at the end of the proof.
We next try and write the second term using $J$.
\begin{align*}
    &tr \left \{ B (\Delta^\star + \tilde \delta) (\Delta^\star + \tilde \delta)^\top KW^\star \right \} \\
    = &\left \{ (\Delta^\star + \tilde \delta)^\top KW^\star B (\Delta^\star + \tilde \delta) \right \} \\
    = &\left \{ (\Delta^\star + \tilde \delta)^\top L^\top A P_\gamma Q_I^{-1} W^\star L^\top A L (\Delta^\star + \tilde \delta) \right \} \\
    = &\left \{ (\Delta^\star + \tilde \delta)^\top L^\top A \underbrace{P_\gamma Q_I^{-1} W^\star Q_I^{-1} H^\top P_{\theta}^\top}_{= \Sigma_{\gamma, \theta}} \Sigma_\theta^{-1} \Sigma_{\theta, \gamma} A L (\Delta^\star + \tilde \delta) \right \} \\
     = &\left \{ (\Delta^\star + \tilde \delta)^\top L^\top A \Sigma_{\gamma, \theta} \Sigma_\theta^{-1} \Sigma_{\theta, \gamma} A L (\Delta^\star + \tilde \delta) \right \} \\
      = &\left \{ (\Delta^\star + \tilde \delta)^\top \underbrace{Q_I^{-1} H^\top P_\theta^\top \Sigma_\theta^{-1} \Sigma_{\theta, \gamma}}_{L^\top} A \Sigma_{\gamma, \theta} \Sigma_\theta^{-1} \Sigma_{\theta, \gamma} A \underbrace{\Sigma_{\gamma, \theta} \Sigma_{\theta}^{-1} P_{\theta} H Q_I^{-1}}_{L} (\Delta^\star + \tilde \delta) \right \} \\
       = &\left \{ (\Delta^\star + \tilde \delta)^\top \underbrace{Q_I^{-1} H^\top P_\theta^\top \Sigma_\theta^{-1/2}}_{B_1^\top} \underbrace{\Sigma_{\theta}^{-1/2} \Sigma_{\theta, \gamma} A \Sigma_{\gamma, \theta} \Sigma_\theta^{-1/2}}_{J} \underbrace{\Sigma_{\theta}^{-1/2} \Sigma_{\theta, \gamma} A \Sigma_{\gamma, \theta} \Sigma_{\theta}^{-1/2}}_{J} \underbrace{\Sigma_{\theta}^{-1/2} P_{\theta} H Q_I^{-1}}_{B_1} (\Delta^\star + \tilde \delta) \right \} \\
        = &\left \{ (\Delta^\star + \tilde \delta)^\top \underbrace{B_1^\top J^{1/2}}_{\tilde B^\top} J \underbrace{J^{1/2} B_1}_{\tilde B} (\Delta^\star + \tilde \delta) \right \} \\
        = &\left \{ (\Delta^\star + \tilde \delta)^\top \tilde B^\top J \tilde B (\Delta^\star + \tilde \delta) \right \}.
\end{align*}
Then
\begin{align*}
\tilde B^\top \tilde B &= B_1^\top J B_1 \top \\
&= Q^{-1}_I H^\top P^\top_{\theta} \Sigma_\theta^{-1/2} \Sigma_\theta^{-1/2} \Sigma_{\theta, \gamma} A \Sigma_{\gamma,\theta} \Sigma_{\theta}^{-1/2} \Sigma_\theta^{-1/2} P_{\theta} H Q_I^{-1} \\
&= \underbrace{Q_I^{-1} H^\top P^\top_{\theta} \Sigma_{\theta}^{-1} \Sigma_{\theta,\gamma}}_{L^\top} A \underbrace{\Sigma_{\gamma, \theta} \Sigma_{\theta}^{-1} P_\theta H Q_I^{-1}}_{L} \\
&= L^\top A L = B.
\end{align*}
Recall the Rayleigh quotient inequality, $Y^\top J Y \leq \|J\| Y^\top Y$ where $\| J \|$ is the spectral norm.  In our setting~$Y = \tilde B (\Delta^\star + \tilde \delta)$ implies the following inequality
$$
E \left( \frac{tr \left \{ B (\Delta^\star + \tilde \delta) (\Delta^\star + \tilde \delta)^\top KW^\star \right \}}{\xi^2} \right) \leq E \left ( \frac{\|J\|}{\xi} \right )
$$
which then implies the following inequality
$$
E \left \{ g(\Delta^\star + \tilde \delta)^\top K \Delta^\star \right \} \geq E \left \{ \frac{tr(J) - 2 \| J \|}{\xi} \right\} 
$$
From this inequality and the other terms above, we have
$$
R( \hat \gamma_{JS}, \gamma^\star) <
R( \hat \gamma_I, \gamma^\star) - \tau
\times E \left ( 
\frac{tr(J) - 2\| J \| - \tau}{\xi}
\right) 
\leq 
R( \hat \gamma_I, \gamma^\star) - \tau
\times \left ( 
\frac{tr(J) - 2\| J \| - \tau}{E(\xi)}
\right) 
$$
where the second line holds by Jensen's inequality and $d := tr(J)/\|J\| > 2$ for any $0 < \tau < 2 \{ tr(J) - \|J\| \}$. Optimal choice of $\tau$ is $\tau^\star = tr(J) - 2 \| J \|$.

\noindent {\bf Confirming equality.} Recall~$M_{IE} = Q_{I, \theta}^{-1} Q_{E,\theta}$. Define
$$
    W^\star  = \left( \begin{array}{cc}
         \frac{1}{c} M_{IE} W_{\theta, E} M_{IE}^\top + W_{\theta I} &  W_{\theta, \gamma, I} \\
          W_{\theta, \gamma, I} & W_{\gamma, I}
    \end{array} \right)
$$
Then~$Q_{I}^{-1} W^\star Q_{I}^{-1}$ is given by 
\begin{align*}
    &\left( \begin{array}{cc}
         Q_{\theta,I}^{-1} & 0 \\
         0 & Q_{\gamma,I}^{-1}
    \end{array} \right)
\left( \begin{array}{cc}
         \frac{1}{c} M_{IE} W_{\theta, E} M_{IE}^\top + W_{\theta I} & W_{\theta, \gamma, I} \\
          W_{\theta, \gamma, I} & W_{\gamma, I}
    \end{array} \right) 
     \left( \begin{array}{cc}
         Q_{\theta,I}^{-1} & 0 \\
         0 & Q_{\gamma,I}^{-1}
    \end{array} \right)
    \\
    =&
    \left( \begin{array}{cc}
         \frac{1}{c} Q_{\theta,E}^{-1} W_{\theta, E} M_{IE}^\top + Q_{\theta, I}^{-1} W_{\theta I} & Q_{\theta, I} W_{\theta, \gamma, I} \\
          Q_{\gamma,I}^{-1} W_{\theta, \gamma, I} & Q_{\gamma, I}^{-1} W_{\gamma, I}
    \end{array} \right)  
    \left( \begin{array}{cc}
         Q_{\theta,I}^{-1} & 0 \\
         0 & Q_{\gamma,I}^{-1}
    \end{array} \right) \\
    =&
    \left( \begin{array}{cc}
         \frac{1}{c} Q_{\theta,E}^{-1} W_{\theta, E} Q_{\theta,E}^{-1} + Q_{\theta, I}^{-1} W_{\theta I} Q_{\theta,I}^{-1} &  Q_{\theta, I} W_{\theta, \gamma, I} Q_{\gamma,I}^{-1} \\
          Q_{\gamma,I}^{-1} W_{\theta, \gamma, I} Q_{\theta,I}^{-1} & Q_{\gamma, I}^{-1} W_{\gamma, I} Q_{\gamma,I}^{-1}
    \end{array} \right)  \\
    =&
    \left( \begin{array}{cc}
         Q_{\theta, I}^{-1} W_{\theta I} Q_{\theta,I}^{-1} &   Q_{\theta, I} W_{\theta, \gamma, I} Q_{\gamma,I}^{-1} \\
          Q_{\gamma,I}^{-1} W_{\theta, \gamma, I} Q_{\theta,I}^{-1} & Q_{\gamma, I}^{-1} W_{\gamma, I} Q_{\gamma,I}^{-1}
    \end{array} \right)   +
    \left( \begin{array}{cc}
         \frac{1}{c} Q_{\theta,E}^{-1} W_{\theta, E} Q_{\theta,E}^{-1} 
         & 0 \\
         0 & 0
    \end{array} \right)  \\
    &= \left( \begin{array}{cc}
         \Sigma_{\theta,I} &   \Sigma_{\theta, \gamma, I} \\
          \Sigma_{\gamma,\theta, I}  & \Sigma_{\gamma, I}
    \end{array} \right)   +
    \left( \begin{array}{cc}
         \frac{1}{c} \Sigma_{\theta,E} 
         & 0 \\
         0 & 0
    \end{array} \right) 
\end{align*}
Which implies, that 
$$
P_{\gamma} H Q_{I}^{-1} W^\star Q_{I}^{-1} H^\top P_{\theta}^\top
= \Sigma_{\gamma, \theta,I}
$$
\end{proof}

\section{Code Availability}
The code for this project is available on GitHub at \url{https://github.com/wdempsey/JS_Mestimators}.

\bibliographystyle{unsrt}
\bibliography{paper-ref}

\end{document}